\renewcommand{\d}{\mathrm{d}}
\newcommand{\cA}{\mathcal{A}}
\newcommand{\cL}{\mathcal{L}}
\newcommand{\cV}{\mathscr{V}}
\newcommand{\tint}{{\textstyle \int}}
\renewcommand{\L}[2]{\frac{\partial L_{#1}}{\partial u_{#2}}}
\newcommand{\var}[3][]{\frac{\delta_{#1} {#2}}{\delta {#3}}}
\newcommand{\der}[2]{\frac{\partial {#1}}{\partial {#2}}}
\DeclareMathOperator{\pr}{pr}
\DeclareMathOperator{\D}{D}
\let\i\relax
\DeclareMathOperator{\i}{\iota}
\newtheorem{thm}{Theorem}
\newtheorem{lemma}[thm]{Lemma}
\newtheorem{prop}[thm]{Proposition}
\newtheorem{cor}[thm]{Corollary}
\theoremstyle{definition}
\newtheorem{defi}[thm]{Definition}
\theoremstyle{remark}
\newtheorem*{remark}{Remark}
\title{On the Lagrangian Structure \\ of Integrable Hierarchies}
\author{Yuri B. Suris, Mats Vermeeren \bigskip\\  Institut f\"ur Mathematik, MA 7-2, TU Berlin, \\ Str.\@ des 17.\@ Juni 136, 10623 Berlin, GERMANY}
\begin{document}

\maketitle

\begin{abstract}
We develop the concept of pluri-Lagrangian structures for integrable hierarchies. This is a continuous counterpart of the pluri-Lagrangian (or Lagrangian multiform) theory of integrable lattice systems. We derive the multi-time Euler Lagrange equations in their full generality for hierarchies of two-dimensional systems, and construct a pluri-Lagrangian formulation of the potential Korteweg-de Vries hierarchy.
\end{abstract}

\section{Introduction}

In this paper, our departure point are two developments which have taken place in the field of discrete integrable systems in recent years. 
\begin{itemize}

\item Firstly, multi-dimensional consistency of lattice systems has been proposed as a notion of integrability \cite{BS1, N}. In retrospect, this notion can be seen as a discrete counterpart of the well-known fact that integrable systems never appear alone but are organized into integrable \emph{hierarchies}. Based on the notion of multi-dimensional consistency, a classification of two-dimensional integrable lattice systems (the so called ABS list) was given in \cite{ABS}. Moreover, for all equations of the ABS list, considered as equations on $\mathbb Z^2$, a variational interpretation was found in \cite{ABS}. 

\item Secondly, the idea of the multi-dimensional consistency was blended with the variational formulation in \cite{LN1}, where it was shown that solutions of any ABS equation on any quad surface $\Sigma$ in $\mathbb{Z}^N$ are critical points of a certain action functional $\int_\Sigma\cL$ obtained by integration of a suitable discrete Lagrangian two-form $\cL$. Moreover, it was observed in \cite{LN1} that the critical value of the action remains invariant under local changes of the underlying quad-surface, or, in other words, that the 2-form $\cL$ is closed on solutions of quad-equations, and it was suggested to consider this as a defining feature of integrability. However, later research \cite{BPS} revealed that $\cL$ is closed not only on solutions of (non-variational) quad-equations, but also on general solutions of the corresponding Euler-Lagrange equations. Therefore, at least for discrete systems, the closedness condition is implicitly contained in the variational formulation.

\end{itemize}

A general theory of multi-time one-dimensional Lagrangian systems, both discrete and continuous, has been developed in \cite{S}. A first attempt to formulate the theory for continuous two-dimensional systems was made in \cite{S2}. For such systems, a solution is a critical point of the action functional $\int_S \cL$ on any two-dimensional surface $S$ in $\mathbb{R}^N$, where $\cL$ is a suitable differential two-form. The treatment in \cite{S2} was restricted to second order Lagrangians, i.e.\@ to two-forms $\cL$ that only depend on the second jet bundle. In the present work we will extend this to Lagrangians of any order.

As argued in \cite{BPS}, the unconventional idea to consider the action on arbitrary two-dimensional surfaces in the multi-dimensional space of independent variables has significant precursors. These include:
\begin{itemize}
\item {\em Theory of pluriharmonic functions} and, more generally, of pluriharmonic maps \cite{R, OV, BFPP}. By definition, a pluriharmonic function of several complex variables $f:\mathbb{C}^N\to\mathbb{R}$ minimizes the Dirichlet functional $E_\Gamma=\int_\Gamma |(f\circ \Gamma)_z|^2dz\wedge d\bar z$ along any holomorphic curve in its domain $\Gamma:\mathbb{C}\to\mathbb{C}^N$. Differential equations governing pluriharmonic functions,
\[
\frac{\partial^2 f}{\partial z_i\partial \bar{z}_j}=0 \quad {\rm for\;\;all}\quad i,j=1,\ldots,N,
\]
are heavily overdetermined. Therefore it is not surprising that pluriharmonic functions (and maps) belong to the theory of integrable systems. 

This motivates the term \emph{pluri-Lagrangian systems}, which was proposed in \cite{BPS, BS2}.

\item \emph{Baxter's Z-invariance} of solvable models of statistical mechanics \cite{Bax1, Bax2}. This concept is based on invariance of the partition functions of solvable models under elementary local transformations of the underlying planar graphs. It is well known (see, e.g., \cite{BoMeSu}) that one can identify planar graphs underlying these models with quad-surfaces in $\mathbb{Z}^N$. On the other hand, the classical mechanical analogue of the partition function is the action functional. This suggests the relation of Z-invariance to the concept of closedness of the Lagrangian 2-form, at least at the heuristic level. This relation has been made mathematically precise for a number of models, through the quasiclassical limit \cite{BMS1, BMS2}.

\item The classical notion of \emph{variational symmetry}, going back to the seminal work of E.~Noether \cite{Noether}, has been shown to be directly related to the closedness of the Lagrangian form in the multi-time \cite{S2}.
\end{itemize}

The main goal of this paper is two-fold: to derive the Euler Lagrange equations for two-dimensional pluri-Lagrangian problems of arbitrary order, and to state the (potential) KdV hierarchy as a pluri-Lagrangian system. We will also discuss the closedness of the Lagrangian two-form, which turns out to be related to the the Hamiltonian theory of integrable hierarchies.

Note that the influential monograph \cite{Dickey}, according to the foreword, is ``about hierarchies of integrable equations rather than about individual equations''. However, its Lagrangian part (chapters 19, 20) only deals with individual equations. The reason for this is apparently the absence of the concept of pluri-Lagrangian systems. We hope that this paper opens up the way for a variational approach to integrable hierarchies.

\section{Pluri-Lagrangian systems}\label{sect-pluriL}

\subsection{Definition}

We place our discussion in the formalism of the variational bicomplex as presented in \cite[Chapter 19]{Dickey} (and summarized, for the reader's convenience, in Appendix \ref{appendix-bicomp}). Slightly different versions of this theory can be found in \cite{Olver} and in \cite{Anderson}.

Consider a vector bundle $X: \mathbb{R}^N \rightarrow \mathbb{R}$ and its $n$-th jet bundle $J^n X$.
Let $\cL\in \cA^{(0,d)}(J^n X)$ be a smooth horizontal $d$-form. In other words, $\cL$ is a $d$-form on $\mathbb R^N$ whose coefficients depend on a function $u: \mathbb{R}^N \rightarrow \mathbb{R}$ and its partial derivatives up to order $n$. We call $\mathbb{R}^N$ the \emph{multi-time}, $u$ the \emph{field}, and $\cL$ the \emph{Lagrangian $d$-form}. We will use coordinates $(t_1, \ldots, t_N)$ on  $\mathbb{R}^N$. Recall that in the standard calculus of variations the Lagrangian is a \emph{volume form}, so that $d=N$.

\begin{defi}
We say that the field $u$ solves the \emph{pluri-Lagrangian problem} for $\mathcal{L}$ if $u$ is a critical point of the action $\int_S \mathcal{L}$ simultaneously for all $d$-dimensional surfaces $S$ in $\mathbb{R}^N$. The equations describing this condition are called the \emph{multi-time Euler-Lagrange equations}. We say that they form a \emph{pluri-Lagrangian system} and that $\mathcal{L}$ is a \emph{pluri-Lagrangian structure} for for these equations.
\end{defi}

To discuss critical points of a pluri-Lagrangian problem, consider the \emph{vertical derivative} $\delta \cL$ of the (0,$d$)-form $\cL$ in the variational bicomplex, and a \emph{variation} $\cV$. Note that we consider variations $\cV$ as vertical vector fields; such a restriction is justified by our interest, in the present paper, in autonomous systems only.  Besides, in the context of discrete systems only vertical vector fields seem to possess a natural analogs. The criticality condition of the action, $\delta \int_S \cL=0$, is described by the equation
\begin{equation}\label{variation}
\int_S \i_{\pr \cV} \delta \cL = 0,
\end{equation}
which has to be satisfied for any variation $\cV$ on $S$ that vanishes at the boundary $\partial S$. Recall that $\pr \cV$ is the $n$-th jet prolongation of the vertical vector field $\cV$, and that $\i$ stands for the contraction. One fundamental property of critical points can be established right from the outset.

\begin{prop}\label{prop-dLconst}
The exterior derivative $\d \cL$ of the Lagrangian is constant on critical points $u$.
\end{prop}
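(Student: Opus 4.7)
The plan is to combine Stokes' theorem with the criticality condition \eqref{variation}, applied on boundaryless $d$-surfaces. The key observation is that the boundary of any $(d{+}1)$-dimensional region $V\subset\mathbb{R}^N$ has itself no boundary, so the criticality of $u$ imposed on $S=\partial V$ holds for \emph{every} variation $\cV$, with no boundary-vanishing requirement.

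Concretely, I would fix an arbitrary $(d{+}1)$-dimensional region $V\subset\mathbb{R}^N$ and apply Stokes' theorem, $\int_V \d\cL = \int_{\partial V}\cL$. Taking the vertical derivative of both sides, using the standard first-variation identity $\delta\!\int_W\alpha=\int_W \i_{\pr\cV}\,\delta\alpha$ valid for any horizontal form $\alpha$ on a submanifold $W$, gives
\[
\int_V \i_{\pr\cV}\,\delta(\d\cL) \;=\; \int_{\partial V} \i_{\pr\cV}\,\delta\cL.
\]
Since $\partial(\partial V)=\emptyset$, the right-hand side vanishes by criticality for every $\cV$. Hence $\int_V \i_{\pr\cV}\,\delta(\d\cL)=0$ for every region $V$ and every variation $\cV$.

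The final step is localization. Because $V$ may be taken arbitrarily small and in any $(d{+}1)$-plane orientation inside $\mathbb{R}^N$, the integrand---itself a horizontal $(d{+}1)$-form---must vanish pointwise. Since $\cV$ is also arbitrary, this gives $\delta(\d\cL)=0$ at $u$, which is exactly the assertion that $\d\cL$ does not depend on the particular critical point, i.e.\ is constant on critical points. The only real subtlety is this localization step: one needs that a horizontal $(d{+}1)$-form whose integral over every $(d{+}1)$-submanifold of $\mathbb{R}^N$ vanishes must vanish pointwise, which follows by testing against small cubes in every orientation and extracting each coefficient $\omega_I$. Everything else is formal manipulation within the variational bicomplex, and no analytic estimates are involved.
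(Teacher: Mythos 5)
Your proof is correct and follows essentially the same route as the paper's: integrate the criticality condition over the (empty-boundaried) boundary of a small $(d+1)$-dimensional region, apply Stokes' theorem together with the commutation of $\delta$ and $\i_{\pr\cV}$ with $\d$, and localize to conclude $\delta(\d\cL)=0$. The only point the paper adds is that passing from $\delta(\d\cL)=0$ to ``constant on critical points'' silently assumes the space of critical points is connected, a hypothesis it acknowledges must be checked in applications.
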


\begin{proof}

Consider a critical point $u$ and a small $(d+1)$-dimensional ball $B$. Because $S := \partial B$ has no boundary, Equation \eqref{variation} is satisfied for any variation $\cV$. Using Stokes' theorem and the properties that $\delta \d + \d \delta = 0$ and $\i_{\pr \cV} \d + \d\i_{\pr \cV} = 0$ (Propositions \ref{prop-delta-d} and \ref{prop-delta-i} in Appendix \ref{appendix-bicomp}), and , we find that
$$
0 = \int_{\partial B} \i_{\pr \cV} \delta \cL = \int_B \d (\i_{\pr \cV} \delta \cL) = -\int_B \i_{\pr \cV} \d(\delta \cL) = \int_B \i_{\pr \cV} \delta (\d \cL).
$$
Since this holds for any ball $B$ it follows that $\i_{\pr \cV} \delta (\d \cL) = 0$ for any variation $\cV$ of a critical point $u$. Therefore, $\delta(\d\cL)=0$, so that $\d \cL$ is constant on critical points $u$. Note that here we silently assume that the space of critical points is connected. It would be difficult to justify this property in any generality, but it is usually clear in applications, where the critical points are solutions of certain well-posed systems of partial differential equations.
\end{proof}

We will take a closer look at the property $\d \cL={\rm const}$ in Section \ref{sect-Hamiltonian}, when we discuss the link with Hamiltonian theory. It will be shown that vanishing of this constant, i.e., closedness of $\cL$ on critical points, is related to integrability of the multi-time Euler-Lagrange equations.

\subsection{Approximation by stepped surfaces}

For computations, we will use the multi-index notation for partial derivatives. For any multi-index $I=(i_1,\ldots,i_N)$ we set
$$
u_I = \frac{\partial^{|I|} u}{(\partial t_1)^{i_1} \ldots (\partial t_N)^{i_N}},
$$
where $|I| = i_1 + \ldots + i_N$. The notations $Ik$ and $Ik^\alpha$ will represent the multi-indices $(i_1,\ldots,i_k + 1, \ldots i_N)$ and $(i_1,\ldots,i_k + \alpha, \ldots i_N)$ respectively. When convenient we will also use the notations $It_k$ and $It_k^\alpha$ for these multi-indices. We will write $k \not\in I$ if $i_k = 0$ and $k \in I$ if $i_k > 0$. We will denote by $\D_i$ or $\D_{t_i}$ the total derivative with respect to coordinate direction $t_i$,
$$
\D_i := \D_{t_i} := \sum_I u_{I i} \der{}{u_I}
$$
and by $\D_I := \D_{t_1}^{i_1} \ldots \D_{t_1}^{i_N}$ the corresponding higher order derivatives.

Our main general result is the derivation of the multi-time Euler-Lagrange equations for two-dimensional surfaces ($d = 2$). That will allow us to study the KdV hierarchy as a pluri-Lagrangian system. However, it is instructive to first derive the multi-time Euler-Lagrange equations for curves ($d = 1$). 
\medskip

The key technical result used to derive multi-time Euler-Lagrange equations is the observation that it suffices to consider a very specific type of surface.

\begin{defi} 
A \emph{stepped $d$-surface} is a $d$-surface that is a finite union of coordinate $d$-surfaces. A \emph{coordinate $d$-surface} of the direction $(i_1, \ldots, i_d)$ is a $d$-surface lying in an affine $d$-plane $\{(t_1,\ldots,t_N) \mid  t_j = c_j\; \; {\rm for} \;\; j \neq  i_1,\ldots,i_d\}$. \end{defi}

\begin{lemma}\label{lemma-conv}
If the action is stationary on any stepped surface, then it is stationary on any smooth surface.
\end{lemma}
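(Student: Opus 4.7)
My plan is to approximate an arbitrary smooth $d$-surface $S$ by a sequence of stepped $d$-surfaces $S_n$ with $\partial S_n \to \partial S$, and then pass to the limit in the criticality condition \eqref{variation}. Concretely, I would fix a lattice of mesh $\varepsilon = 1/n$ on $\mathbb R^N$ and assemble $S_n$ from axis-aligned coordinate $d$-faces of the mesh nearest to $S$, arranging that $S_n \to S$ and $\partial S_n \to \partial S$ in Hausdorff distance at rate $O(\varepsilon)$. The two-dimensional staircase approximating a smooth planar curve ($d=1$, $N=2$) is the prototype.

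The central technical step is the continuity claim $\int_{S_n} \omega \to \int_S \omega$ for every smooth $d$-form $\omega$ defined in a neighbourhood of $S$, and this is the main obstacle I anticipate. Note that it does not follow from convergence of $d$-volumes, which typically fails: a staircase has length strictly greater than the diagonal it approximates, and the analogous phenomenon persists in higher dimension. Instead, the integral of $\omega$ over a coordinate $d$-face only sees the coordinate component of $\omega$ tangent to that face, so that summing over the faces of $S_n$ produces Riemann sums for the integrals of these components over the corresponding coordinate projections of $S$; the standard change-of-variables formula then identifies the limit with $\int_S \omega$. Equivalently, $S_n \to S$ as integral currents with controlled mass in each of the $\binom{N}{d}$ coordinate directions, and smooth $d$-forms pair continuously with such a limit. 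Making this rigorous requires a careful componentwise analysis or an appeal to integral-current convergence.

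With the continuity claim in hand, the proof concludes as follows. Let $\cV$ be a variation vanishing on $\partial S$. A standard cut-off argument reduces to variations whose prolongation $\pr \cV$ is supported away from an open neighbourhood of $\partial S$; for $n$ large, $\pr \cV$ then vanishes near $\partial S_n$ as well, so the stepped-surface hypothesis gives $\int_{S_n} \i_{\pr \cV} \delta \cL = 0$. Applying the continuity claim to $\omega = \i_{\pr \cV} \delta \cL$ yields $\int_S \i_{\pr \cV} \delta \cL = 0$, which is the criticality condition on $S$.
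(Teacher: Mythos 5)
Your overall strategy coincides with the paper's: build a stepped approximation $S_n$ on a mesh of size $\varepsilon$, invoke stationarity there, and pass to the limit; and you correctly identify the one genuine obstacle, namely that $\int_{S_n}\omega\to\int_S\omega$ cannot be deduced from convergence of $d$-volumes (the staircase phenomenon). Where you diverge is in how that convergence is established. You propose a componentwise Riemann-sum argument (each coordinate $d$-face of $S_n$ pairs only with the matching coordinate component of $\omega$, and the sums converge to the integrals over the coordinate projections of $S$), or equivalently convergence as integral currents; you explicitly leave the rigorous execution open, and this is precisely where the care is needed --- one must track signed multiplicities of the projections where $S$ folds relative to a coordinate direction, and one must also verify that the ``nearest faces'' recipe actually assembles into a coherent oriented stepped surface. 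The paper closes exactly this gap by a different device: it constructs $\widehat S$ from $S$ by an explicit deformation (iterated ``balloon maps'' radiating from the centers of the cells and their faces), so that the two surfaces cobound an explicit $(d+1)$-dimensional region $M$ of volume $\mathcal O(\varepsilon)$, and then Stokes' theorem gives $\int_{\widehat S}\i_{\pr\cV}\delta\cL-\int_S\i_{\pr\cV}\delta\cL=\int_M \d(\i_{\pr\cV}\delta\cL)+\mathcal O(\varepsilon)\to 0$ without any componentwise bookkeeping. In the language of your last remark, the balloon construction is an explicit witness of flat-norm (current) convergence, so the paper's proof is the rigorous completion of the route you only gesture at; your componentwise version can also be made to work and is arguably more elementary in spirit, but as written the central continuity claim remains a sketch rather than a proof. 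Your concluding cut-off argument for the variation near $\partial S$ is fine and plays the same role as the paper's $\mathcal O(\varepsilon)$ boundary strip.
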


The proof of this Lemma can be found in appendix \ref{Appendix proof of lemma}.

\subsection{Multi-time Euler-Lagrange equations for curves}

\begin{thm}\label{thm-pEL-1}
Consider a Lagrangian 1-form $\mathcal{L} = \sum_{i=1}^N L_i \,\d t_i$. The multi-time Euler-Lagrange equations for curves are:
\begin{align}
&\var[i]{L_i}{u_I} = 0 \qquad \forall I \not\ni i, \label{pEL-1-0} \\
&\var[i]{L_i}{u_{Ii}} = \var[j]{L_j}{u_{Ij}}  \qquad \forall I, \label{pEL-1-1}
\end{align}
where $i$ and $j$ are distinct, 
and the following notation is used for the variational derivative corresponding to the coordinate direction $i$:
$$
\var[i]{L_i}{u_{I}} = \sum_{\alpha \ge 0} (-1)^\alpha \D_i^\alpha \der{L_i}{u_{I i^\alpha}} = \der{L_i}{u_I} - \D_i \der{L_i}{u_{I i}} + \D_i^2  \der{L_i}{u_{I i^2}} - \ldots.
$$
\end{thm}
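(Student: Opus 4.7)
The plan is to exploit Lemma~\ref{lemma-conv}, which reduces stationarity on all curves to stationarity on stepped curves, i.e., finite unions of coordinate segments. The essential leverage is that although a variation $\cV$ is a single scalar field on $\mathbb R^N$, when restricted to a coordinate segment its transverse partial derivatives can be prescribed independently of the on-segment data by choosing a suitable extension off the curve. This extra freedom is what produces both equation families: \eqref{pEL-1-0} from interior variations on a single segment, and \eqref{pEL-1-1} from matching conditions at the corners where two segments meet.

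For \eqref{pEL-1-0}, I would take a single coordinate segment $S_1$ along direction $i$, with variation $\cV$ supported away from its endpoints. Decomposing an arbitrary multi-index as $I=I'i^\alpha$ with $I'\not\ni i$ and $\alpha\ge 0$, we have $\delta u_{I'i^\alpha}=\D_i^\alpha \D_{I'}\cV$, and integration by parts $\alpha$ times in the only admissible variable $t_i$ rewrites the interior contribution as
\[
\sum_{I'\not\ni i}\int_{S_1}\var[i]{L_i}{u_{I'}}\,\D_{I'}\cV\,\d t_i .
\]
Since $\D_{I'}\cV$ can be chosen arbitrarily and independently on $S_1$ for each $I'\not\ni i$, stationarity forces $\var[i]{L_i}{u_{I'}}=0$ for all $I'\not\ni i$, which is \eqref{pEL-1-0}.

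For \eqref{pEL-1-1}, I would take an L-shaped stepped curve: $S_1$ along direction $i$ joined to $S_2$ along direction $j$ at a corner point $p$, with $\cV$ supported near $p$. By \eqref{pEL-1-0} the interior contributions on each segment vanish, leaving only the boundary terms at $p$ produced by integration by parts. On $S_1$ these take the form
\[
\sum_{I'\not\ni i}\sum_{\alpha\ge 1}\sum_{\beta=0}^{\alpha-1}(-1)^\beta \D_i^\beta \der{L_i}{u_{I'i^\alpha}}(p)\cdot \D_i^{\alpha-1-\beta}\D_{I'}\cV(p).
\]
Reindexing via $\gamma=\alpha-1-\beta$ and $K=I'i^\gamma$, which is a bijection between pairs $(I',\gamma)$ with $I'\not\ni i$, $\gamma\ge 0$ and arbitrary multi-indices $K$, the inner $\beta$-sum collapses into a complete variational derivative, so the contribution becomes
\[
\sum_{K}\var[i]{L_i}{u_{Ki}}(p)\cdot \D_K\cV(p).
\]
The analogous contribution from $S_2$ carries the opposite sign (since $p$ is the lower $t_j$-endpoint of $S_2$), giving $-\sum_{K}\var[j]{L_j}{u_{Kj}}(p)\,\D_K\cV(p)$. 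Since the jet values $\D_K\cV(p)$ may be prescribed independently for every $K$, the matching of coefficients yields \eqref{pEL-1-1}.

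The main technical obstacle is the re-indexing step: the alternating double sum of boundary terms must be recognized as a \emph{complete} variational derivative $\delta_i L_i/\delta u_{Ki}$, not a truncated tail, and this requires careful bookkeeping of which multi-indices migrate where. A secondary point worth explicit mention is the justification that $\D_K\cV(p)$ may be freely assigned for every $K$, which follows from constructing compactly supported test extensions whose Taylor coefficients at $p$ are arbitrary, and that conversely, the derived equations \eqref{pEL-1-0}, \eqref{pEL-1-1} are also sufficient since any general stepped-curve variation is a superposition of the two cases analyzed.
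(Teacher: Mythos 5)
Your proposal is correct and follows essentially the same route as the paper: reduction to stepped (L-shaped) curves via Lemma \ref{lemma-conv}, integration by parts along each coordinate segment, recognition of the corner boundary terms as complete variational derivatives $\var[i]{L_i}{u_{Ki}}$ via exactly the re-indexing you describe, and the sign flip from the opposite orientations induced on the cusp. The only cosmetic difference is that you derive \eqref{pEL-1-0} on a single segment before treating the corner, whereas the paper performs the whole computation on the L-shaped curve and then specializes the variation; the content is identical.
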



\begin{remark}
In the special case that $\mathcal{L}$ only depends on the first jet bundle, system \eqref{pEL-1-0}--\eqref{pEL-1-1} reduces to the equations found in \cite{S}:
\begin{align*}
& \var[i]{L_i}{u} = 0 \quad  \Leftrightarrow\quad \frac{\partial L_i}{\partial u}- \D_i \frac{\partial L_i}{\partial u_i} =0, \\
& \var[i]{L_i}{u_j} = 0 \quad  \Leftrightarrow\quad \frac{\partial L_i}{\partial u_j}=0\quad {\rm for}\quad  i\neq j,\\
&\var[i]{L_i}{u_{i}} = \var[j]{L_j}{u_{j}} \quad  \Leftrightarrow\quad \frac{\partial L_i}{\partial u_i}=\frac{\partial L_j}{\partial u_j} \quad {\rm for}\quad i\neq j.
\end{align*}
\end{remark}

\begin{proof}[Proof of Theorem \ref{thm-pEL-1}]
According to Lemma \ref{lemma-conv}, it is sufficient to look at a general L-shaped curve $S = S_i \cup S_j$, where $S_i$ is a line segment of the coordinate direction $i$  and $S_j$ is a line segment of the coordinate direction $j$. Denote the cusp by $p := S_i \cap S_j$. We orient the curve such that $S_i$ induces the positive orientation on the point $p$ and $S_j$  the negative orientation. There are four cases, depending on how the $L$-shape is rotated. They are depicted in Figure \ref{fig-L}. To each case we associate a pair $(\varepsilon_i, \varepsilon_j) \in \{-1,+1\}^2$, where the positive value is taken if the respective piece of curve is oriented in the coordinate direction, and negative if it is oriented opposite to the coordinate direction.

\begin{figure}[htb]
\centering
\includegraphics[width=\linewidth]{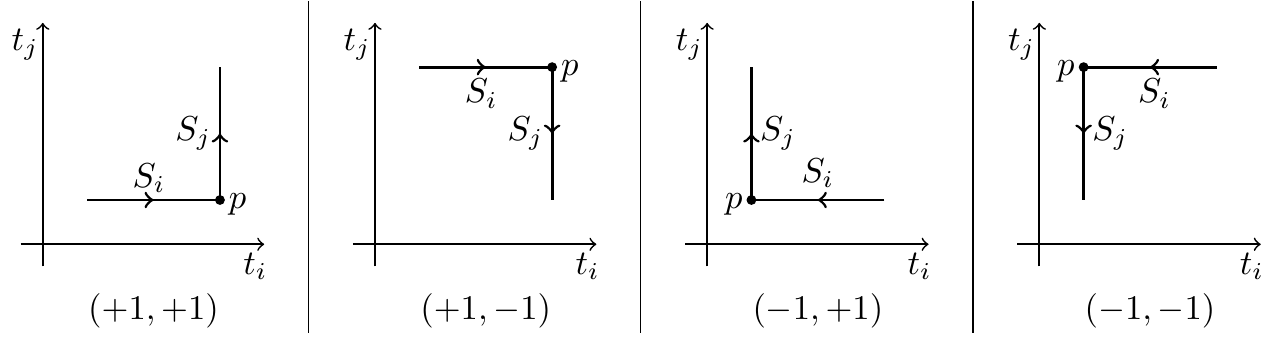}
\caption{The four L-shaped curves with their values of $(\varepsilon_i, \varepsilon_j)$.}
\label{fig-L}
\end{figure}

The variation of the action is
\begin{align*}
\int_{S} \i_{\pr \cV} \delta \cL
&= \varepsilon_i \int_{S_i} (\i_{\pr \cV} \delta L_i) \,\d t_i + \varepsilon_j \int_{S_j} (\i_{\pr \cV} \delta L_j) \,\d t_j \\
&= \varepsilon_i \int_{S_i} \sum_{I} \der{L_i}{u_I} \delta u_I(\cV) \,\d t_i + \varepsilon_j \int_{S_j} \sum_{I} \der{L_j}{u_I} \delta u_I(\cV) \,\d t_j.
\end{align*}
Note that these sums are actually finite. Indeed, since $\cL$ depends on the $n$-th jet bundle all terms with $|I| := i_1 + \ldots + i_N > n$ vanish.

Now we expand the sum in the first of the integrals and perform integration by parts.
\enlargethispage{1em}
\begin{align*}
& \varepsilon_i \int_{S_i} (\i_{\pr \cV} \delta L_i) \,\d t_i \\
&= \varepsilon_i \int_{S_i} \sum_{I \not\ni i} \left( \der{L_i}{u_I} \delta u_I(\cV) + \der{L_i}{u_{I i}} \delta u_{I i}(\cV) + \der{L_i}{u_{I i^2}} \delta u_{I i^2}(\cV) + \der{L_i}{u_{I i^3}} \delta u_{I i^3}(\cV) + \ldots \right) \d t_i\\
&= \varepsilon_i \int_{S_i} \sum_{I \not\ni  i} \left(\der{L_i}{u_I} - \D_i \der{L_i}{u_{I i}} + \D_i^2 \der{L_i}{u_{I i^2}} - \D_i^3 \der{L_i}{u_{I i^3}} + \ldots \right) \delta u_I(\cV) \, \d t_i\\
&\quad + \sum_{I \not\ni  i} \bigg( \der{L_i}{u_{I i}} \delta u_I(\cV) + \der{L_i}{u_{I i^2}} \delta u_{I i}(\cV) - \D_i \der{L_i}{u_{I i^2}} \delta u_{I}(\cV) \\
&\hspace{20mm} + \der{L_i}{u_{I i^3}} \delta u_{I i^2}(\cV) - \D_i \der{L_i}{u_{I i^3}}(\cV) \delta u_{I i}(\cV) + \D_i^2  \der{L_i}{u_{I i^3}} \delta u_{I}(\cV) + \ldots \bigg) \bigg|_p.
\end{align*}
Using the language of variational derivatives, this reads
\begin{align*}
\varepsilon_i \int_{S_i} (\i_{\pr \cV} \delta L_i) \,\d t_i &= \varepsilon_i \int_{S_i} \sum_{I \not\ni i} \var[i]{L_i}{u_I} \delta u_I(\cV) \,\d t_i \\
&\qquad + \sum_{I \not\ni i} \left( \var[i]{L_i}{u_{I i}} \delta u_I(\cV) + \var[i]{L_i}{u_{I i^2}} \delta u_{I i}(\cV) + \ldots \right) \bigg|_p\\
&= \varepsilon_i \int_{S_i} \sum_{I \not\ni i} \var[i]{L_i}{u_I} \delta u_I(\cV) \,\d t_i
+ \sum_{I}\bigg( \var[i]{L_i}{u_{I i}} \delta u_I(\cV) \bigg)\bigg|_p.
\end{align*}
The other piece, $S_j$, contributes
$$
\varepsilon_j \int_{S_j} \i_{\pr \cV} \delta L_j \,\d t_j = \varepsilon_j \int_{S_j} \sum_{I \not\ni j} \var[j]{L_j}{u_I} \delta u_I(\cV) \,\d t_j
- \sum_{I} \bigg( \var[j]{L_j}{u_{I j}} \delta u_I(\cV) \bigg)\bigg|_p,
$$
where the minus sign comes from the fact that $S_j$ induces negative orientation on the point $p$. Summing the two contributions, we find
\begin{align}
\int_{S} \i_{\pr \cV} \delta \cL  &= \varepsilon_i \int_{S_i} \sum_{I \not\ni i} \var[i]{L_i}{u_I} \delta u_I(\cV) \,\d t_i
+ \varepsilon_j \int_{S_j} \sum_{I \not\ni j} \var[j]{L_j}{u_I} \delta u_I(\cV) \,\d t_j \notag\\
&\qquad + \sum_{I} \bigg( \var[i]{L_i}{u_{I i}} \delta u_I(\cV) - \var[j]{L_j}{u_{I j}} \delta u_I(\cV) \bigg)\bigg|_p. \label{1dvariation}
\end{align}

Now require that the variation \eqref{1dvariation} of the action is zero for any variation $\cV$. If we consider variations that vanish on $S_j$, then we find for every multi-index $I$ which does not contain $i$ that
$$
\var[i]{L_i}{u_I} = 0.
$$
Given this equation, and its analogue for the index $j$, only the last term remains in the right hand side of Equation \eqref{1dvariation}. Considering variations around the cusp $p$ we find for every multi-index $I$ that
$$
\var[i]{L_i}{u_{Ii}} = \var[j]{L_j}{u_{Ij}}.
$$
It is clear these equations combined are also sufficient for the action to be critical.
\end{proof}

\subsection{Multi-time Euler-Lagrange equations for two-dimensional surfaces}

The two-dimensional case ($d = 2$) covers many known integrable hierarchies, including the potential KdV hierarchy which we will discuss in detail later on. We consider a Lagrangian two-form $\mathcal{L} = \sum_{i < j} L_{ij} \,\d t_i \wedge \d t_j$ and we will use the notational convention $L_{ji} = - L_{ij}$.

\begin{thm}\label{thm-pEL-2}
The multi-time Euler-Lagrange equations for two-dimensional surfaces are
\begin{align}
&\var[ij]{L_{ij}}{u_I} =0, & \forall I \not\ni i,j, \label{pEL-2-0}\\
&\var[ij]{L_{ij}}{u_{I j}} = \var[ik]{L_{ik}}{u_{I k}} &  \forall I \not\ni i, \label{pEL-2-1}\\
&\var[ij]{L_{ij}}{u_{I i j}} + \var[jk]{L_{jk}}{u_{I j k}} + \var[ki]{L_{ki}}{u_{I k i}} = 0 & \forall I, \label{pEL-2-2}
\end{align}
where $i$, $j$ and $k$ are distinct, 
and the following notation is used for the variational derivative corresponding to the coordinate directions $i,j$:
$$
\var[ij]{L_{ij}}{u_{I}} := \sum_{\alpha,\beta \ge 0} (-1)^{\alpha+\beta} \D_i^\alpha \D_j^\beta \der{L_{ij}}{u_{I i^\alpha j^\beta}}.
$$
\end{thm}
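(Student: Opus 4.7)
The plan is to mirror the one-dimensional proof of Theorem \ref{thm-pEL-1}. First, by Lemma \ref{lemma-conv} it suffices to check stationarity on stepped 2-surfaces, i.e.\@ finite unions of coordinate rectangles $S_{ij}$ of direction $(i,j)$. Two such rectangles can be glued along a shared 1-dimensional edge in a coordinate direction (for instance $S_{ij}$ and $S_{ik}$ along a $t_i$-edge), and three rectangles $S_{ij}$, $S_{jk}$, $S_{ki}$ can meet at a single 0-dimensional corner. I will expand $\int_S \i_{\pr \cV} \delta \cL$ on each piece by a two-fold integration by parts, and then extract each of the three equations by testing on three successively more complex stepped surfaces.

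On a single piece $S_{ij}$, applying the standard integration-by-parts identity to each term $\der{L_{ij}}{u_{I i^\alpha j^\beta}}\,\delta u_{I i^\alpha j^\beta}$ with $I\not\ni i,j$ and summing over $\alpha,\beta\ge 0$ should produce three kinds of contributions: a bulk integrand of the form $\sum_{I\not\ni i,j}\var[ij]{L_{ij}}{u_I}\delta u_I(\cV)$; one-dimensional boundary integrands along the four edges of the rectangle, whose telescoping in $\beta$ (along a $t_i$-edge) collapses to $\pm\sum_{I\not\ni i}\var[ij]{L_{ij}}{u_{Ij}}\delta u_I(\cV)$, and symmetrically along $t_j$-edges; and corner point contributions $\pm\sum_{I}\var[ij]{L_{ij}}{u_{Iij}}\delta u_I(\cV)$ at each of the four vertices. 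The signs will depend on the orientations of the edges and vertices within each rectangle, and on the convention $L_{ji}=-L_{ij}$.

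With this decomposition in hand, the three multi-time Euler-Lagrange equations will emerge one by one. A variation supported in the interior of a single flat rectangle immediately yields \eqref{pEL-2-0}. Assuming \eqref{pEL-2-0}, for two rectangles $S_{ij}$ and $S_{ik}$ joined along a common $t_i$-edge, the only surviving contribution of a variation localised near that edge is the sum of the two edge integrands; applying the one-dimensional argument of Theorem \ref{thm-pEL-1} along the edge then forces \eqref{pEL-2-1}. Assuming \eqref{pEL-2-0} and \eqref{pEL-2-1}, for three rectangles $S_{ij}$, $S_{jk}$, $S_{ki}$ meeting at a common corner, only the cyclic sum of the three corner contributions survives, and its vanishing is precisely \eqref{pEL-2-2}. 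Sufficiency follows by reading the computation backwards.

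I expect the main technical obstacle to be the second step: showing rigorously that the cascade of boundary and corner terms produced by iterated integration by parts along both $t_i$ and $t_j$ collapses into the compact variational expressions $\var[ij]{L_{ij}}{u_{Ij}}$ and $\var[ij]{L_{ij}}{u_{Iij}}$, and tracking orientations carefully enough that the matching conditions come out with the signs required in \eqref{pEL-2-1}--\eqref{pEL-2-2}. Beyond this combinatorial bookkeeping, the argument is a faithful two-dimensional elaboration of the L-shaped curve computation used to prove Theorem \ref{thm-pEL-1}.
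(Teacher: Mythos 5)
Your overall strategy is the one the paper follows: reduce to stepped surfaces via Lemma \ref{lemma-conv}, perform a double integration by parts on each coordinate piece $S_{ij}$ to split $\int_{S_{ij}}\i_{\pr\cV}\delta\cL$ into a bulk term with density $\sum_{I\not\ni i,j}\var[ij]{L_{ij}}{u_I}\,\delta u_I(\cV)$, edge terms with densities $\sum_{I\not\ni i}\var[ij]{L_{ij}}{u_{Ij}}\,\delta u_I(\cV)$, and corner terms $\sum_I\var[ij]{L_{ij}}{u_{Iij}}\,\delta u_I(\cV)$, and then localize variations in the interior, on an edge, and at a corner to extract \eqref{pEL-2-0}, \eqref{pEL-2-1} and \eqref{pEL-2-2} respectively. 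The telescoping you flag as the main technical obstacle is exactly the index substitution ($\beta=\mu-\rho-1$, $\alpha=\lambda-\pi-1$) carried out in the paper, and it does come out as you predict. The necessity direction of your argument is therefore sound.

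There is, however, one genuine gap, in the sufficiency direction. You assert that the relevant local configurations are a single rectangle, two rectangles sharing an edge, and \emph{three} rectangles meeting at a corner, and that sufficiency ``follows by reading the computation backwards.'' But a general stepped $2$-surface can have $q>3$ flat pieces arranged cyclically around a vertex (a $q$-flower in the paper's terminology), with directions $(t_{i_1},t_{i_2}),(t_{i_2},t_{i_3}),\ldots,(t_{i_q},t_{i_1})$; the corner contribution there is a $q$-term cyclic sum, and \eqref{pEL-2-2} only asserts the vanishing of three-term cyclic sums. The paper closes this with a separate lemma: a $q$-flower integral equals a sum of $3$-flower integrals, obtained by inserting auxiliary petals $S_{1a}$ and $S_{a1}$ that cancel in pairs because of opposite orientations (equivalently, the $q$-term cyclic sum of $\var[ij]{L_{ij}}{u_{Iij}}$ decomposes into three-term cyclic sums using $L_{ji}=-L_{ij}$). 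This is a short telescoping argument, but it is a needed step and your proposal does not supply it; without it you have derived \eqref{pEL-2-0}--\eqref{pEL-2-2} as necessary conditions but not shown they imply stationarity on arbitrary stepped surfaces.
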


\begin{remark}
In the special case that $\mathcal{L}$ only depends on the second jet bundle, this system reduces to the equations stated in \cite{S2}.
\end{remark}

Before proceeding with the proof of Theorem \ref{thm-pEL-2}, we introduce some terminology and prove a lemma. A two-dimensional stepped surface consisting of $q$ flat pieces intersecting at some point $p$ is called a \emph{$q$-flower} around $p$, the flat pieces are called its \emph{petals}. If the action is stationary on every $q$-flower, it is stationary on any stepped surface. By Lemma \ref{lemma-conv} the action will then be stationary on any surface. The following Lemma shows that it is sufficient to consider 3-flowers.

\begin{lemma} If the action is stationary on every $3$-flower, then it is stationary on every $q$-flower for any $q > 3$.
\end{lemma}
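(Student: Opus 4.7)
The plan is to induct on $q$. The base case $q = 3$ is the hypothesis of the lemma. For the inductive step, I would geometrically decompose a $q$-flower $F$ as the formal sum of a $(q-1)$-flower $F'$ and a $3$-flower $G$ sharing one auxiliary petal with opposite orientations, so that the $q$-term corner equation at $p$ factors additively into the corner equations for $F'$ and $G$, which are then dispatched by the inductive hypothesis and the standing hypothesis respectively.

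Concretely, label the petals of $F$ cyclically as $P_1, \ldots, P_q$, where $P_a$ and $P_{a+1}$ share an edge at $p$ in some coordinate direction $k_{a,a+1}$. Introduce an auxiliary coordinate $2$-disk $P_0$ attached at $p$, whose coordinate plane is spanned by $k_{q-2,q-1}$ and $k_{q,1}$. With the orientation of $P_0$ chosen appropriately, the unions $F' := P_1 \cup \cdots \cup P_{q-2} \cup P_0$ and $G := (-P_0) \cup P_{q-1} \cup P_q$ form a valid $(q-1)$-flower and a valid $3$-flower, and $F = F' + G$ as oriented chains, since $P_0$ and $-P_0$ cancel.

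Repeating the integration-by-parts computation from the proof of Theorem \ref{thm-pEL-2} on $F'$ and $G$, the bulk and edge contributions already vanish on critical points, since every petal and every pair of adjacent petals of $F$ embeds into some $3$-flower; so only the corner term at $p$ requires attention. This term is additive over petals with signs determined by their orientations, so writing $X_a$ for the contribution of petal $P_a$ and $X_0$ for that of $P_0$, the corner equation for $F'$ reads $\sum_{a=1}^{q-2} \varepsilon_a X_a + \varepsilon_0 X_0 = 0$ and that for $G$ reads $-\varepsilon_0 X_0 + \varepsilon_{q-1} X_{q-1} + \varepsilon_q X_q = 0$. By the inductive hypothesis the first holds on critical points; by the $3$-flower hypothesis so does the second. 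Adding them cancels the auxiliary $\pm \varepsilon_0 X_0$ and reproduces the $q$-term corner equation $\sum_{a=1}^q \varepsilon_a X_a = 0$ for $F$, completing the inductive step.

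The main obstacle is the combinatorial step of constructing $P_0$: its plane must be spanned by two distinct coordinate directions, so we need $k_{q-2,q-1} \neq k_{q,1}$. This may fail for degenerate configurations in which many petals share a common edge direction, but since $q > 3$ and the flower is cyclically arranged, some triple of three consecutive petals always satisfies the condition after a suitable relabelling, so a legal auxiliary petal always exists.
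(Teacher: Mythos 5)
Your decomposition is essentially the paper's own argument: the paper splits the $q$-flower directly into the $q-2$ three-flowers $F_{123}, F_{134}, \ldots, F_{1\,q-1\,q}$, whose auxiliary petals cancel in oppositely oriented pairs, which is exactly what your induction produces once unrolled (your successive auxiliary petals all contain the edge direction shared by $P_q$ and $P_1$, just as the paper's all contain $t_{i_1}$). The degeneracy you flag --- the two spanning directions of the auxiliary petal coinciding, which happens precisely when two adjacent petals are coplanar and can then be merged into a single flat piece --- is present but unaddressed in the paper's construction as well, so your proof is correct to the same standard.
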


\begin{proof} Let $F$ be a $q$-flower. Denote its petals corresponding to coordinate directions $(t_{i_1},t_{i_2})$, $(t_{i_2},t_{i_3})$, \ldots, $(t_{i_q},t_{i_1})$ by $S_{12}$, $S_{23}$, \ldots, $S_{q1}$ respectively. Consider the $3$-flower $F_{123} = S_{12} \cup S_{23} \cup S_{31}$, where $S_{31}$ is a petal in the coordinate direction $(t_{i_3},t_{i_1})$ such that $F_{123}$ is a flower around the same point as F. Similarly, define $F_{134}, \ldots, F_{1\,q-1\,q}$. Then (for any integrand)
\begin{align*}
\int_{F_{123}} + \int_{F_{134}} &+ \ldots + \int_{F_{1\,q-1\,q}} \\
=& \int_{S_{12}} + \int_{S_{23}} + \int_{S_{31}} + \int_{S_{13}} + \int_{S_{34}} + \int_{S_{41}} + \ldots + \int_{S_{1\,q-1}} + \int_{S_{q-1\,q}} + \int_{S_{q1}}.
\end{align*}
Here, $S_{21}$, $S_{32}$, \ldots are the petals $S_{12}$, $S_{23}$, \ldots but with opposite orientation (see Figure \ref{fig-floristry}). Therefore all terms where the index of $S$ contains $1$ cancel, except for the first and last, leaving
\[
\int_{F_{123}} + \ldots + \int_{F_{1\,q-1\,q}}
= \int_{S_{12}} + \int_{S_{23}} + \int_{S_{34}} + \ldots + \int_{S_{q-1\,q}} + \int_{S_{q1}} = \int_F. 
\]
By assumption the action is stationary on every 3-flower, so
\[ 
\int_F \i_{\pr \cV} \delta \cL = \int_{F_{123}} \i_{\pr \cV} \delta \cL + \ldots + \int_{F_{1\,q-1\,q}} \i_{\pr \cV} \delta \cL = 0. \qedhere 
\]
\end{proof}

\begin{figure}[h]
\begin{minipage}{.48\textwidth}
\centering
\includegraphics[width=\linewidth]{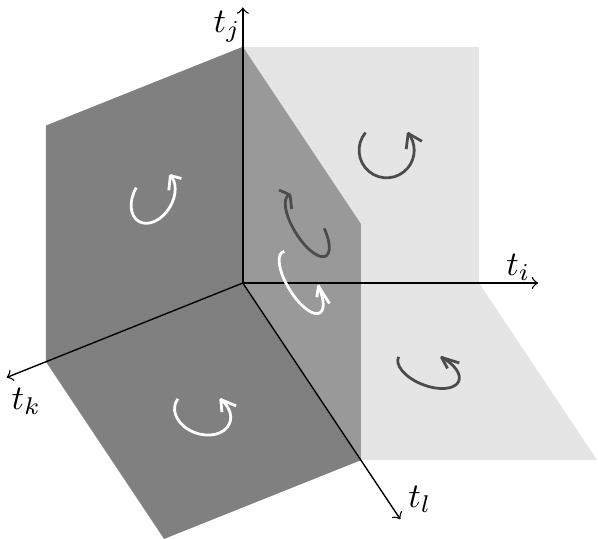}
\caption{Two 3-flowers composed to form a 4-flower. The common petal does not contribute to the integral because it occurs twice with opposite orientation.}
\label{fig-floristry}
\end{minipage}%
\hspace{.04\textwidth}%
\begin{minipage}{.48\textwidth}
\centering
\includegraphics[width=\linewidth]{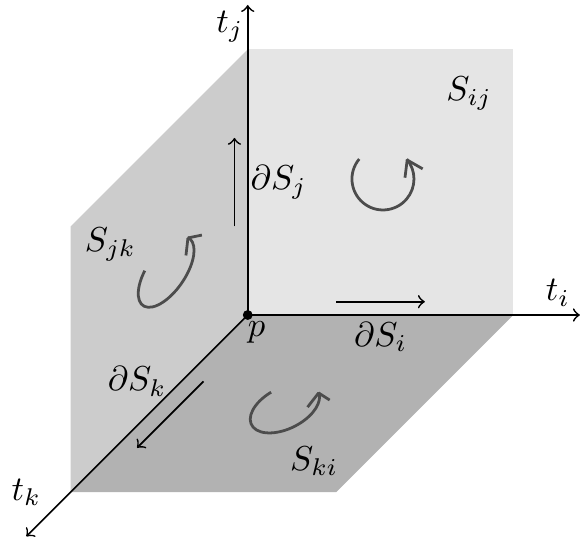}
\caption{A 3-flower. Different petals induce the opposite orientation on the common boundary.}
\label{fig-flower}
\end{minipage}
\end{figure}
\begin{figure}[t]
\centering
\includegraphics[width=\linewidth]{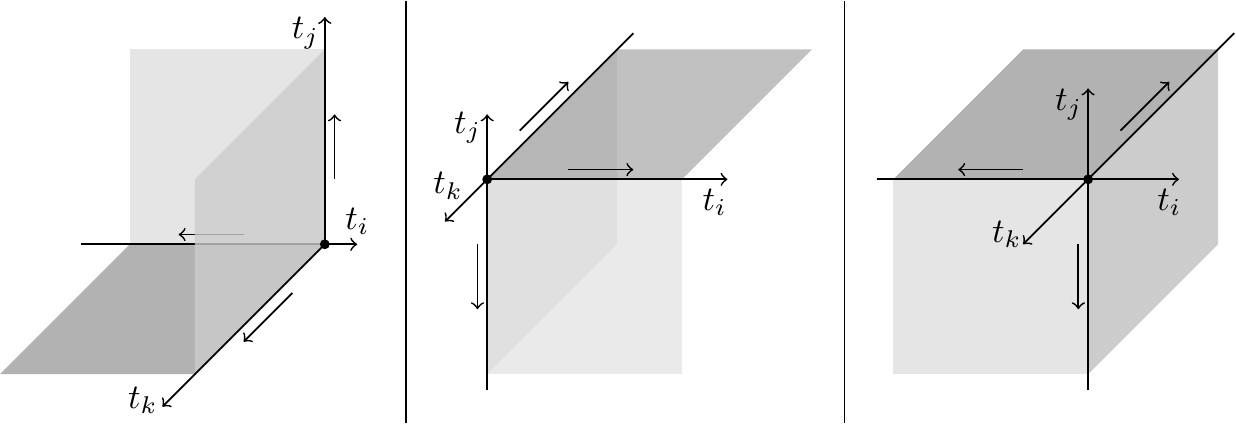}
\caption{Three of the other 3-flowers. The orientations of the interior edges do not all correspond to the coordinate direction.}
\label{fig-uglyflowers}
\end{figure}

\begin{proof}[Proof of Theorem \ref{thm-pEL-2}]
Consider a 3-flower $S = S_{ij} \cup S_{jk} \cup S_{ki}$ around the point $p = S_{ij} \cap S_{jk} \cap S_{ki}$. Denote its interior edges by
$$\partial S_i:= S_{ij} \cap S_{ki} ,\qquad \partial S_j := S_{jk} \cap S_{ij} ,\qquad \partial S_k := S_{ki} \cap S_{jk}.$$
On $S_i$, $S_j$ and $S_k$ we choose the orientations that induce negative orientation on $p$. We consider the case where these orientations correspond to the coordinate directions, as in Figure \ref{fig-flower}. The cases where one or more of these orientations are opposite to the corresponding coordinate direction (see Figure \ref{fig-uglyflowers}) can be treated analogously and yield the same result.

We choose the orientation on the petals in such a way that the orientations of $S_i$, $S_j$ and $S_k$ are induced by $S_{ij}$, $S_{jk}$ and $S_{ki}$ respectively. Then the orientations of $S_i$, $S_j$ and $S_k$ are the opposite of those induced by $S_{ki}$, $S_{ij}$ and $S_{jk}$ respectively (see Figure \ref{fig-flower}).

We will calculate 
\begin{equation}\label{actiononflower}
\int_{S} \i_{\pr \cV} \delta \mathcal{L} = \int_{S_{ij}} \i_{\pr \cV} \delta \mathcal{L} + \int_{S_{jk}} \i_{\pr \cV} \delta \mathcal{L} + \int_{S_{ki}} \i_{\pr \cV} \delta \mathcal{L}
\end{equation}
and require it to be zero for any variation $\cV$ which vanishes on the (outer) boundary of $S$. This will give us the multi-time Euler-Lagrange equations. 

For the first term of Equation \eqref{actiononflower} we find
\begin{align*}
\int_{S_{ij}} \i_{\pr \cV} \delta \mathcal{L} 
&=\int_{S_{ij}} \sum_I \L{ij}{I} \delta u_{I}(\cV) \, \d t_i \wedge \d t_j \\
&= \int_{S_{ij}} \sum_{I \not\ni i,j} \sum_{\lambda,\mu \ge 0} \L{ij}{I i^\lambda j^\mu} \delta u_{I i^\lambda j^\mu}(\cV) \, \d t_i \wedge \d t_j.
\end{align*}
First we perform integration by parts with respect to $t_i$ as many times as possible.
\begin{align*}
\int_{S_{ij}} \i_{\pr \cV} \delta \mathcal{L} 
&=\int_{S_{ij}} \sum_{I \not \ni i,j} \sum_{\lambda,\mu \ge 0} (-1)^\lambda \D_i^\lambda \L{ij}{I i^\lambda j^\mu} \delta u_{I j^\mu}(\cV) \, \d t_i \wedge \d t_j \\
&\qquad - \int_{\partial S_j} \sum_{I \not \ni i,j} \sum_{\lambda,\mu \ge 0} \sum_{\pi = 0}^{\lambda-1} (-1)^\pi \D_i^\pi \L{ij}{I i^\lambda j^\mu} \delta u_{I i^{\lambda - \pi -1}j^\mu}(\cV) \, \d t_j.
\end{align*}
Next integrate by parts with respect to $t_j$ as many times as possible.
\begin{align}
\int_{S_{ij}} \i_{\pr \cV} \delta \mathcal{L} 
\label{int-ij} &=\int_{S_{ij}} \sum_{I \not \ni i,j} \sum_{\lambda,\mu \ge 0} (-1)^{\lambda+\mu} \D_i^\lambda \D_j^\mu \L{ij}{I i^\lambda j^\mu} \delta u_{I}(\cV) \, \d t_i \wedge \d t_j \\
\label{bdy-j} &\qquad - \int_{\partial S_j} \sum_{I \not \ni i,j} \sum_{\lambda,\mu \ge 0} \sum_{\pi = 0}^{\lambda-1} (-1)^\pi \D_i^\pi \L{ij}{I i^\lambda j^\mu} \delta u_{I i^{\lambda - \pi -1}j^\mu}(\cV) \, \d t_j  \\
\label{bdy-i} &\qquad - \int_{\partial S_i} \sum_{I \not \ni i,j} \sum_{\lambda,\mu \ge 0} \sum_{\rho = 0}^{\mu-1} (-1)^{\lambda + \rho} \D_i^\lambda \D_j^\rho \L{ij}{I i^\lambda j^\mu} \delta u_{I j^{\mu - \rho -1}}(\cV) \, \d t_i.
\end{align}
The signs of \eqref{bdy-j} and \eqref{bdy-i} are due to the choice of orientations (see Figure \ref{fig-flower}). We can rewrite the integral $\eqref{int-ij}$ as
$$\int_{S_{ij}} \sum_{I \not \ni i,j} \var[ij]{L_{ij}}{u_I} \delta u_I(\cV) \, \d t_i \wedge \d t_j.$$
The last integral $\eqref{bdy-i}$ takes a similar form if we replace the index $\mu$ by $\beta = \mu - \rho - 1$.
\begin{align*}
-\int_{\partial S_i} \sum_{I \not \ni i,j} & \sum_{\lambda,\mu \ge 0} \sum_{\rho = 0}^{\mu-1} (-1)^{\lambda + \rho} \D_i^\lambda \D_j^\rho \L{ij}{I i^\lambda j^\mu} \delta u_{I j^{\mu - \rho -1}}(\cV) \, \d t_i \\
&= -\int_{\partial S_i} \sum_{I \not \ni i,j} \sum_{\beta,\lambda,\rho \ge 0} (-1)^{\lambda + \rho} \D_i^\lambda \D_j^\rho \L{ij}{I i^\lambda j^{\beta + \rho + 1}} \delta u_{I j^\beta}(\cV) \, \d t_i \\
&= -\int_{\partial S_i} \sum_{I \not \ni i,j} \sum_{\beta \ge 0} \var[ij]{L_{ij}}{u_{I j^{\beta+1}}} \delta u_{I j^\beta}(\cV) \,\d t_i.
\end{align*}
To write the other boundary integral $\eqref{bdy-j}$ in this form we first perform integration by parts.
\begin{align*}
- \int_{\partial S_j} \sum_{I \not \ni i,j} & \sum_{\lambda,\mu \ge 0} \sum_{\pi = 0}^{\lambda-1} (-1)^\pi \D_i^\pi \L{ij}{I i^\lambda j^\mu} \delta u_{I i^{\lambda - \pi -1} j^\mu}(\cV) \, \d t_j \\
&= - \int_{\partial S_j} \sum_{I \not \ni i,j} \sum_{\lambda,\mu \ge 0} \sum_{\pi = 0}^{\lambda-1} (-1)^{\pi + \mu} \D_i^\pi \D_j^\mu \L{ij}{I i^\lambda j^\mu} \delta u_{I i^{\lambda - \pi -1}}(\cV) \, \d t_j \\
&\qquad + \sum_{I \not \ni i,j} \sum_{\lambda,\mu \ge 0} \sum_{\pi = 0}^{\lambda-1} \sum_{\rho = 0}^{\mu-1} (-1)^{\pi + \rho} \left.\left( \D_i^\pi \D_j^\rho \L{ij}{I i^\lambda j^\mu} \delta u_{I i^{\lambda - \pi -1} j^{\mu - \rho -1}}(\cV) \right) \right|_p .
\end{align*}
Then we replace $\lambda$ by $\alpha = \lambda - \pi -1$ and in the last term $\mu$ by $\beta = \mu - \rho - 1$.
\begin{align*}
- \int_{\partial S_j} & \sum_{I \not \ni i,j} \sum_{\lambda,\ge 0} \sum_{\pi = 0}^{\lambda-1} (-1)^\pi \D_i^\pi \L{ij}{I i^\lambda j^\mu} \delta u_{I i^{\lambda - \pi -1} j^\mu}(\cV) \, \d t_j \\
&= - \int_{\partial S_j} \sum_{I \not \ni i,j} \sum_{\alpha,\mu,\pi \ge 0} (-1)^{\pi + \mu} \D_i^\pi \D_j^\mu \L{ij}{I i^{\alpha + \pi + 1} j^\mu} \delta u_{I i^\alpha}(\cV) \, \d t_j \\
&\qquad +  \sum_{I \not \ni i,j} \sum_{\alpha,\beta,\pi,\rho \ge 0} \left.\left( (-1)^{\pi + \rho} \D_i^\pi \D_j^\rho \L{ij}{I i^{\alpha + \pi + 1} j^{\beta + \rho + 1}} \delta u_{I i^\alpha j^\beta}(\cV) \right) \right|_p \\
&= - \int_{\partial S_j} \sum_{I \not \ni i,j} \sum_{\alpha \ge 0} \var[ij]{L_{ij}}{u_{I i^{\alpha + 1}}} \delta u_{I i^\alpha}(\cV) \, \d t_j  +   \sum_{I \not \ni i,j} \sum_{\alpha,\beta \ge 0} \left.\left( \var[ij]{L_{ij}}{u_{I i^{\alpha + 1} j^{\beta + 1}}} \delta u_{I i^\alpha j^\beta}(\cV) \right)\right|_p .
\end{align*}
Putting everything together we find
\begin{align*}
\int_{S_{ij}} \i_{\pr \cV} \delta \mathcal{L}  
&= \int_{S_{ij}} \sum_{I \not \ni i,j} \var[ij]{L_{ij}}{u_I} \delta u_I(\cV) \, \d t_i \wedge \d t_j
- \int_{\partial S_i} \sum_{I \not \ni i} \var[ij]{L_{ij}}{u_{Ij}} \delta u_{I}(\cV) \d t_i \\
&\qquad - \int_{\partial S_j} \sum_{I \not \ni j} \var[ij]{L_{ij}}{u_{Ii}} \delta u_{I}(\cV) \, \d t_j 
+  \left.\left( \sum_{I}  \var[ij]{L_{ij}}{u_{I i j}} \delta u_{I}(\cV) \right)\right|_p .
\end{align*}
Expressions for the integrals over $S_{jk}$ and $S_{ki}$ are found by cyclic permutation of the indices. Finally we obtain
\begin{align}
\int_{S} \i_{\pr \cV} \delta \mathcal{L} 
&= \int_{S_{ij}} \sum_{I \not \ni i,j} \var[ij]{L_{ij}}{u_I} \delta u_I(\cV) \, \d t_i \wedge \d t_j \notag \\
&\qquad - \int_{\partial S_i} \Bigg(\sum_{I \not \ni i}  \var[ij]{L_{ij}}{u_{I j}} \delta u_{I}(\cV) + \sum_{I \not \ni i} \var[ki]{L_{ki}}{u_{I k}} \delta u_{I}(\cV) \Bigg)  \, \d t_i \label{varaction} \\
&\qquad + \sum_{I} \left.\left( \var[ij]{L_{ij}}{u_{I i j}} \delta u_{I}(\cV) \right) \right|_p 
\qquad + \text{ cyclic permutations in } i,j,k. \notag
\end{align}
From this we can read off the multi-time Euler-Lagrange equations.
\end{proof}

\section{Pluri-Lagrangian structure of the Sine-Gordon equation}\label{sect-SG}
 
We borrow our first example of a pluri-Lagrangian system from \cite{S2}.

The Sine-Gordon equation $u_{xy}=\sin u$ is the Euler-Lagrange equation for
$$L = \frac{1}{2}u_xu_y-\cos u.$$
Consider the vector field $\varphi \der{}{u}$ with
$$\varphi=u_{xxx}+\frac{1}{2}u_x^3$$
and its prolongation 
$\D_\varphi := \sum_I \varphi_I \der{}{u_I}.$
It is known that $\D_\varphi$ is a variational symmetry for the sine-Gordon equation \cite[p. 336]{Olver}. In particular, we have that
\begin{equation}\label{varsym}
\D_\varphi L=  \D_x N + \D_y M
\end{equation}
with
\begin{align*}
M & = \dfrac{1}{2}\varphi u_x-\dfrac{1}{8}u_x^4+\dfrac{1}{2}u_{xx}^2,  \\
N & = \dfrac{1}{2}\varphi u_y -\dfrac{1}{2}u_x^2\cos u - u_{xx}(u_{xy}-\sin u). 
\end{align*}

Now we introduce a new independent variable $z$ corresponding to the ``flow'' of the generalized vector field $\D_\varphi$, i.e.\@ $u_z = \varphi$.
Consider simultaneous solutions of the Euler-Lagrange equation $\var{L}{u} = 0$ and of the flow $u_z = \varphi$ as functions of three independent variables $x,y,z$.  Then Equation \eqref{varsym} expresses the closedness of the two-form 
$$\cL =  L\, dx \wedge dy - M\, dz \wedge dx - N\, dy\wedge dz.$$
The fact that $\d \cL = 0$ on solutions is consistent with Proposition \ref{prop-dLconst}. Hence $\cL$ is a reasonable candidate for a Lagrangian two-form.

\begin{thm}
The multi-time Euler-Lagrange equations for the Lagrangian two-form 
$$\cL =  L_{12}\ dx\wedge dy + L_{13}\ dx\wedge dz + L_{23}\ dy\wedge dz$$
with the components 
\begin{align}
L_{12} & = \frac{1}{2}u_xu_y-\cos u, \label{eq: SG Lagr L}\\
L_{13} & = \frac{1}{2}u_xu_z-\dfrac{1}{8}u_x^4+\dfrac{1}{2}u_{xx}^2, \label{eq: SG Lagr M}\\
L_{23} & = -\frac{1}{2}u_yu_z+\dfrac{1}{2}u_x^2\cos u+u_{xx}(u_{xy}-\sin u), \label{eq: SG Lagr N}
\end{align}
consist of the sine-Gordon equation $$u_{xy} = \sin u,$$ the modified KdV equation $$u_z = u_{xxx} + \frac{1}{2} u_x^3,$$ and corollaries thereof. On solutions of either of these equations the two-form $\cL$ is closed. 
\end{thm}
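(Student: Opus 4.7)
The plan is to apply Theorem \ref{thm-pEL-2} directly with $(t_1,t_2,t_3)=(x,y,z)$ to the three components $L_{12},L_{13},L_{23}$, read off the system, and then verify the factorization of $\d\cL$ separately.

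Computing the variational derivatives \eqref{pEL-2-0} at $I=\emptyset$ is the most informative step. For $(i,j)=(1,2)$, only $u, u_x, u_y$ appear in $L_{12}$, and $\var[12]{L_{12}}{u}=\partial_u L_{12}-\D_x\partial_{u_x}L_{12}-\D_y\partial_{u_y}L_{12}$ collapses to $\sin u-u_{xy}=0$, the sine-Gordon equation. For $(i,j)=(1,3)$, the corresponding derivative yields $-u_{xz}+\frac{3}{2}u_x^2u_{xx}+u_{xxxx}=0$, i.e.\ $u_{xz}=\D_x(u_{xxx}+\tfrac12 u_x^3)$, which is only the $x$-derivative of mKdV. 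The equation mKdV itself emerges from the matching relation \eqref{pEL-2-1} with $i=3,j=1,k=2, I=\emptyset$: short calculations give $\var[31]{L_{31}}{u_x}=-\tfrac12u_z+\tfrac12u_x^3+u_{xxx}$ and $\var[32]{L_{32}}{u_y}=\tfrac12u_z$, whose equality is $u_z=u_{xxx}+\tfrac12 u_x^3$.

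Next I would work through the remaining cases of \eqref{pEL-2-0}, \eqref{pEL-2-1} and \eqref{pEL-2-2} for $I=\emptyset$ and $|I|=1$, and verify that each either reduces to an identity or is a differential consequence of the two equations already found. In particular, the triangle relations \eqref{pEL-2-2} at $I=\emptyset$ are trivially $0=0$ because $L_{12}$ has no $u_{xy}$, $L_{13}$ no $u_{xz}$, and $L_{23}$ no $u_{yz}$ dependence. The remaining matching relations either replicate the identities $\var[12]{L_{12}}{u_y}=\var[13]{L_{13}}{u_z}=\tfrac12u_x$ and $\var[21]{L_{21}}{u_x}=\var[23]{L_{23}}{u_z}=-\tfrac12u_y$ automatically (both sides equal after direct evaluation), or yield derivatives of sine-Gordon / mKdV. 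Equations for $|I|\ge1$ are obtained from those for $|I|=0$ by total differentiation, because differentiation with respect to variables other than $i,j$ commutes with $\var[ij]{}{}$.

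Finally, for closedness, I would compute $\d\cL=(\D_zL_{12}-\D_yL_{13}+\D_xL_{23})\,dx\wedge dy\wedge dz$ directly. Term by term one finds massive cancellations (the $\tfrac12 u_xu_{yz}$, $\tfrac12 u_yu_{xz}$, and $u_{xx}u_{xxy}$ terms cancel; the $u_xu_{xx}\cos u$ terms cancel), and the remainder collects into the clean factorization
\[
\d\cL=(u_z-u_{xxx}-\tfrac12u_x^3)(\sin u-u_{xy})\,dx\wedge dy\wedge dz,
\]
which vanishes whenever either the sine-Gordon equation or mKdV holds.

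The main obstacle is purely organizational: the many boundary-type variational derivatives in \eqref{pEL-2-1}–\eqref{pEL-2-2} must each be matched and shown to give either a tautology or a consequence of the two basic equations, and one must verify this for $|I|\ge1$ without getting lost in indices. Writing out a small table of $\partial L_{ij}/\partial u_K$ for the few $K$ that actually occur in each $L_{ij}$ keeps this manageable. The key insight that makes the closedness step clean is recognizing the two basic equations already as the two factors in $\d\cL$, which is exactly the relation \eqref{varsym} written after the on-shell substitution $u_z=\varphi$.
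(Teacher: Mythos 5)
Your overall strategy coincides with the paper's: specialize Theorem \ref{thm-pEL-2} to $(t_1,t_2,t_3)=(x,y,z)$, enumerate the equations \eqref{pEL-2-0}--\eqref{pEL-2-2}, and verify the factorization of $\d\cL$ by direct computation. The steps you actually carry out are correct: $\var[12]{L_{12}}{u}=0$ gives sine-Gordon, $\var[13]{L_{13}}{u}=0$ gives the $x$-differentiated mKdV, the matching relation $\var[31]{L_{31}}{u_x}=\var[32]{L_{32}}{u_y}$ gives mKdV itself, the triangle equations \eqref{pEL-2-2} vanish identically, and your factorization $\d\cL=(u_z-u_{xxx}-\tfrac12u_x^3)(\sin u-u_{xy})\,\d x\wedge \d y\wedge \d z$ agrees with the paper's $-\bigl(u_z-\tfrac12u_x^3-u_{xxx}\bigr)(u_{xy}-\sin u)$.

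The genuine gap is the sentence claiming that equations for $|I|\ge 1$ are obtained from those for $|I|=0$ by total differentiation. This is false: the multi-time Euler--Lagrange equations for different multi-indices $I$ are \emph{independent} members of the system, not differential consequences of one another, and the commutation of $\D_x$ with the operator $\sum(-1)^{\alpha+\beta}\D_y^\alpha\D_z^\beta$ does not relate $\var[23]{L_{23}}{u_{x^\alpha}}$ to $\D_x^\alpha\var[23]{L_{23}}{u}$, because $\D_x\,\partial/\partial u_J\neq \partial/\partial u_{Jx}$. Concretely, $\var[23]{L_{23}}{u}=0$ reads $u_{yz}=\tfrac12u_x^2\sin u+u_{xx}\cos u$, whereas $\var[23]{L_{23}}{u_x}=0$ reads $u_{xxy}=u_x\cos u$ and $\var[23]{L_{23}}{u_{xx}}=0$ reads $u_{xy}=\sin u$; the total $x$-derivative of the first involves $u_{xyz}$ and is not either of the other two. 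Each of these equations must be computed separately and then checked to be a corollary of the two basic equations --- a finite task, since each $L_{ij}$ depends on finitely many derivatives, and exactly what the paper's proof does in its itemized list. In this example everything does reduce to sine-Gordon and mKdV, so the theorem survives, but your shortcut would have silently skipped the nontrivial equations at $I=x$ and $I=x^2$, and in general such skipped equations could impose additional constraints.
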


\enlargethispage{1em}
\begin{proof}
Let us calculate the multi-time Euler-Lagrange equations \eqref{pEL-2-0}--\eqref{pEL-2-2} one by one:

\begin{itemize}
\item The equation $\displaystyle \var[12]{L_{12}}{u} = 0$ yields \dotfill  $u_{xy} = \sin u$.

For any $\alpha > 0$  the equation $\displaystyle \var[12]{L_{12}}{u_{z^\alpha}} = 0$ yields $0=0$.

\item The equation $\displaystyle \var[13]{L_{13}}{u} = 0$ yields \dotfill  $u_{xz} = \tfrac{3}{2} u_x^2 u_{xx} + u_{xxxx}$.

For any $\alpha > 0$ the equation $\displaystyle \var[13]{L_{13}}{u_{y^\alpha}} = 0$ yields $0=0$.

\item The equation $\displaystyle \var[23]{L_{23}}{u} = 0$ yields \dotfill  $u_{yz} = \tfrac{1}{2} u_x^2 \sin u + u_{xx} \cos u$.

The equation $\displaystyle \var[23]{L_{23}}{u_x} = 0$ yields \dotfill $u_{yxx} = u_x \cos u$.

The equation $\displaystyle \var[23]{L_{23}}{u_{xx}} = 0$ yields \dotfill $u_{xy} = \sin u$.

For any $\alpha > 2$ , the equation $\displaystyle \var[23]{L_{23}}{u_{x^\alpha}} = 0$ yields $0=0$.

\item The equation $\displaystyle \var[13]{L_{13}}{u_x} = \var[23]{L_{23}}{u_y}$ yields \dotfill  $u_z = u_{xxx} + \tfrac{1}{2} u_x^3$.

The equation $\displaystyle \var[13]{L_{13}}{u_{xx}} = \var[23]{L_{23}}{u_{xy}}$ yields $u_{xx} = u_{xx}$.

For any other $I$ the equation $\displaystyle \var[13]{L_{13}}{u_{Ix}} = \var[23]{L_{23}}{u_{Iy}}$ yields $0=0$.

\item The equation $\displaystyle \var[12]{L_{12}}{u_y} = \var[13]{L_{13}}{u_z}$ yields $\frac{1}{2}u_x = \frac{1}{2}u_x$.

For any nonempty $I$, the equation $\displaystyle \var[12]{L_{12}}{u_{Iy}} = \var[13]{L_{13}}{u_{Iz}}$ yields $0=0$.

\item The equation $\displaystyle \var[12]{L_{12}}{u_x} = \var[23]{L_{32}}{u_z}$ yields $\frac{1}{2}u_y = \frac{1}{2}u_y$.

For any nonempty $I$, the equation $\displaystyle \var[12]{L_{12}}{u_{Iy}} = \var[23]{L_{32}}{u_{Iz}}$ yields $0=0$.

\item For any $I$ the equation $\displaystyle\var[12]{L_{12}}{v_{Ixy}} + \var[23]{L_{23}}{v_{Iyz}} + \var[13]{L_{31}}{v_{Izx}} = 0$ yields $0=0$.
\end{itemize}
It remains to notice that all nontrivial equations in this list are corollaries of the equations $u_{xy} = \sin u$ and $u_z = u_{xxx} + \tfrac{1}{2} u_x^3$, derived by differentiation.

The closedness of $\cL$ can be verified by direct calculation:
\begin{align*}
\D_z L_{12} - \D_y L_{13} + \D_x L_{23}
&= \frac{1}{2} (u_{yz} u_x + u_{xz} u_y) + u_z \sin u \\
&\qquad - \frac{1}{2} u_{yz} u_x - \frac{1}{2} u_z u_{xy} + \frac{1}{2} u_x^3 u_{xy} - u_{xx} u_{xxy} \\
&\qquad - \frac{1}{2} u_{xz} u_y - \frac{1}{2} u_z u_{xy} + u_x u_{xx} \cos u - \frac{1}{2} u_x^3 \sin u \\
&\qquad + u_{xxx} (u_{xy} - \sin u) + u_{xx} (u_{xxy} - u_x \cos u) \\
&= - \left(u_z - \frac{1}{2} u_x^3 - u_{xxx}\right)(u_{xy} - \sin u). \qedhere
\end{align*}
\end{proof}

\begin{remark}
The Sine-Gordon equation and the modified KdV equation are the simplest equations of their respective hierarchies. Furthermore, those hierarchies can be seen as the positive and negative parts of one single hierarchy that is infinite in both directions \cite[sect. 3c and 5k]{Newell}. It seems likely that this whole hierarchy possesses a pluri-Lagrangian structure.
\end{remark}

\section{The KdV hierarchy}\label{sect-KdV}

Our second and the main example of a pluri-Lagrangian system will be the (potential) KdV hierarchy. This section gives an overview of the relevant known facts about KdV, mainly following Dickey \cite[Section 3.7]{Dickey}. The next section will present its pluri-Lagrangian structure.

One way to introduce the \emph{Korteweg-de Vries (KdV) hierarchy} is to consider a formal power series 
$$
R = \sum_{k=0}^\infty r_k z^{-2k-1},
$$
with the coefficients $r_k=r_k[u]$ being polynomials of $u$ and its partial derivatives with respect to $x$, satisfying the equation
\begin{equation}\label{resolvent}
R_{xxx} + 4uR_x + 2u_x R - z^2 R_x = 0.
\end{equation}
Multiplying this equation by $R$ and integrating with respect to $x$ we find
\begin{equation}\label{resolvent-int}
R R_{xx} - \tfrac{1}{2} R_x^2 + 2 \left( u - \tfrac{1}{4} z^2 \right) R^2 = C(z),
\end{equation}
where $C(z) = \sum_{k = 0}^\infty c_k z^{-2k}$ is a formal power series in $z^{-2}$, with coefficients $c_k$ being constants.  
%
%
Different choices of $C(z)$ correspond to different normalizations of the KdV hierarchy. We take $C(z) = \tfrac{1}{8}$, i.e.\@ $c_0 = \tfrac{1}{8}$ and $c_k = 0$ for $k >0$. The first few coefficients of the power series $R = r_0 z^{-1} + r_1 z^{-3} + r_2 z^{-5} + \ldots$ are 
$$
r_0 = \tfrac{1}{2}, \quad r_1 = u, \quad r_2 = u_{xx} + 3 u^2, \quad r_{3} = u_{xxxx} + 10 u u_{xx} + 5 u_x^2 + 10 u^3.
$$

The Korteweg-de Vries hierarchy is defined as follows.
\begin{defi}
\begin{itemize}
\item The \emph{KdV hierarchy} is the family of equations
$$
u_{t_k} = ( r_k[u] )_x.
$$
\item Write $g_k[v] := r_k[v_x]$. The \emph{potential KdV (PKdV) hierarchy} is the family of equations
$$
v_{t_k} = g_k[v] .
$$ 
\item The \emph{differentiated potential KdV (DPKdV) hierarchy} is the family of equations
$$
v_{x t_k} = ( g_k[v] )_x.$$ 
\end{itemize}
\end{defi}
The right-hand sides of first few PKdV equations are
$$
g_1 = v_x, \quad g_2 = v_{xxx} + 3 v_x^2, \quad g_{3} = v_{xxxxx} + 10 v_x v_{xxx} + 5 v_{xx}^2 + 10 v_x^3.
$$

\begin{remark}
The first KdV and PKdV equations, $u_{t_1} = u_x$, resp. $v_{t_1}=v_x$,  allow us to identify $x$ with $t_1$.
\end{remark}

\begin{prop}\label{prop-KdV-var}
The differential polynomials $r_k[u]$ satisfy
$$
\var{r_k}{u} = \left(4 k - 2\right) r_{k-1},
$$
where $\var{}{u}$ is shorthand notation for $\var[1]{}{u}$.
\end{prop}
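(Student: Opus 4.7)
The plan is to translate the claim into a single generating-function identity and prove it via a duality argument built from operators dual to the resolvent equation. First I would reformulate: with the convention $r_{-1}:=0$ (which handles $k=0$ since $r_0=\tfrac{1}{2}$ makes $\var{r_0}{u}=0$), summing $\var{r_k}{u}=(4k-2)r_{k-1}$ against $z^{-2k-1}$ and performing a straightforward power-series manipulation shows that the proposition is equivalent to
$$\sum_{k\ge 0}\var{r_k}{u}\,z^{-2k-1} \;=\; -\frac{2}{z}\,\partial_z R.$$

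Next I would derive two parallel linear relations from the integrated resolvent identity \eqref{resolvent-int}. Introducing the second-order operator
$$\mathcal{A}\phi \;:=\; R\,\phi_{xx} - R_x\,\phi_x + \bigl[R_{xx} + 4\bigl(u - \tfrac{z^2}{4}\bigr)R\bigr]\phi,$$
varying \eqref{resolvent-int} with respect to $u$ gives $\mathcal{A}(\delta R) = -2R^2\,\delta u$, while differentiating \eqref{resolvent-int} with respect to $z$ gives $\mathcal{A}(\partial_z R) = zR^2$. A standard integration by parts produces the formal adjoint
$$\mathcal{A}^*\phi \;=\; R\,\phi_{xx} + 3R_x\,\phi_x + \bigl[3R_{xx} + 4\bigl(u-\tfrac{z^2}{4}\bigr)R\bigr]\phi,$$
and the crucial point is that the quantity $\eta := \partial_z R/(zR^2)$ satisfies $\mathcal{A}^*\eta = 1$. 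This follows by direct substitution: applying the quotient rule in $\mathcal{A}^*(\eta)$ and collecting all terms over the common denominator $zR^3$, the numerator telescopes to $R\cdot \mathcal{A}(\partial_z R)$, so $\mathcal{A}^*\eta = \mathcal{A}(\partial_z R)/(zR^2)$, which equals $1$ by the source identity just derived.

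Having $1 = \mathcal{A}^*\eta$, duality immediately yields
$$\int \delta R\,dx \;=\; \int (\mathcal{A}^*\eta)\,\delta R\,dx \;=\; \int \eta\,\mathcal{A}(\delta R)\,dx \;=\; -\frac{2}{z}\int \partial_z R\cdot\delta u\,dx,$$
and matching the left-hand side against $\sum_k z^{-2k-1}\int\var{r_k}{u}\,\delta u\,dx$ coefficient by coefficient in $z$ completes the proof. The main obstacle is the verification $\mathcal{A}^*\eta = 1$: it is a routine but somewhat tedious quotient-rule bookkeeping, and the telescoping to $R\cdot\mathcal{A}(\partial_z R)$ in the numerator only emerges after careful grouping of the rational expressions in $R, R_x, R_{xx}$ produced by expanding $\partial_x^2(\partial_z R/(zR^2))$.
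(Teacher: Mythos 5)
Your argument is correct, and it supplies an actual proof where the paper gives none: the paper simply defers to Dickey [3.7.11--3.7.14] for this statement. Your generating-function reformulation $\sum_{k}\var{r_k}{u}\,z^{-2k-1}=-\tfrac{2}{z}\partial_z R$ is the right restatement (the index shift checks out, and the convention $r_{-1}=0$ correctly absorbs $k=0$ since $r_0=\tfrac12$ is constant). The two source identities $\mathcal{A}(\delta R)=-2R^2\,\delta u$ and $\mathcal{A}(\partial_z R)=zR^2$ do follow from applying $\delta$, respectively $\partial_z$, to \eqref{resolvent-int} with the constant normalization $C(z)=\tfrac18$, and your formula for the formal adjoint $\mathcal{A}^*$ is correct. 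The step you flag as tedious, $\mathcal{A}^*\eta=1$, can be organized more cleanly as a conjugation identity: for arbitrary $\phi$ one has $R^2\,\mathcal{A}^*\!\left(\phi R^{-2}\right)=\mathcal{A}\phi$ (the $\phi R_x^2$ terms cancel, $6-6=0$, and the remaining coefficients $-4+3=-1$ and $-2+3=1$ reproduce those of $\mathcal{A}$), after which $\mathcal{A}^*\eta=\mathcal{A}(\partial_z R)/(zR^2)=1$ is immediate. The final duality step is legitimate in the formal variational calculus: everything lives in the ring of formal Laurent series in $z^{-1}$ with differential-polynomial coefficients ($R^{-2}$ is such a series because the leading coefficient $r_0=\tfrac12$ is invertible), the integrations by parts are modulo total $x$-derivatives, and since no derivatives of $\delta u$ survive on either side, the identity $\int\var{R}{u}\,\delta u\,\d x=-\tfrac{2}{z}\int\partial_z R\,\delta u\,\d x$ for all $\delta u$ yields the pointwise equality coefficient by coefficient. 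This is in the same spirit as Dickey's resolvent computation but is self-contained, and it makes visible that the proposition uses only the integrated identity \eqref{resolvent-int} with constant right-hand side --- exactly the normalization the paper fixes.
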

A proof of this statement can be found in \cite[3.7.11--3.7.14]{Dickey}.

\begin{cor}\label{cor-KdV-var}
Set $h_k [v]:= \frac{1}{4 k + 2}g_{k+1}[v]$, then the differential polynomials $g_k$ and $h_k$ satisfy
$$\var{g_k}{v_x} = \left(4 k - 2\right) g_{k-1} \qquad \text{and} \qquad \var{h_k}{v_x} = g_k.$$
\end{cor}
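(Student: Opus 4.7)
The plan is to reduce the corollary to Proposition \ref{prop-KdV-var} via a straightforward chain-rule argument for variational derivatives.

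First I would observe that because $g_k[v] = r_k[v_x]$, the differential polynomial $g_k$ depends on $v$ only through $v_x, v_{xx}, v_{xxx}, \ldots$, with the correspondence
$$
\der{g_k}{v_{x^{j+1}}} \;=\; \der{r_k}{u_{x^j}}\bigg|_{u=v_x} \qquad (j \ge 0).
$$
Since the total $x$-derivative $\D_x$ treats $v_{x^{j+1}}$ and $u_{x^j}|_{u=v_x}$ identically (they represent the same function of $x$ along any $v$), $\D_x$ commutes with the substitution $u = v_x$. Applying this term by term to the definition of the variational derivative gives
$$
\var{g_k}{v_x} \;=\; \sum_{j \ge 0} (-1)^j \D_x^j \der{g_k}{v_{x^{j+1}}} \;=\; \left(\sum_{j \ge 0} (-1)^j \D_x^j \der{r_k}{u_{x^j}}\right)\bigg|_{u=v_x} \;=\; \var{r_k}{u}\bigg|_{u=v_x}.
$$
By Proposition \ref{prop-KdV-var}, the right-hand side equals $(4k-2)\, r_{k-1}[v_x] = (4k-2)\, g_{k-1}[v]$, which is the first identity.

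For the second identity, I would simply substitute the definition $h_k[v] := \frac{1}{4k+2}\,g_{k+1}[v]$ and use the first identity applied to $g_{k+1}$:
$$
\var{h_k}{v_x} \;=\; \frac{1}{4k+2}\, \var{g_{k+1}}{v_x} \;=\; \frac{4(k+1)-2}{4k+2}\, g_k \;=\; g_k.
$$

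The argument has essentially no obstacle; the only point that warrants care is the justification that $\var{\,\cdot\,}{v_x}$ applied to a polynomial in $v_x, v_{xx}, \ldots$ coincides with $\var{\,\cdot\,}{u}$ applied to the same polynomial viewed as a function of $u, u_x, \ldots$ via $u = v_x$. Once that chain-rule identity is recorded, both statements of the corollary are immediate consequences of Proposition \ref{prop-KdV-var}.
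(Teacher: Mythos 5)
Your proposal is correct, and it fills in exactly the step the paper leaves implicit: the paper states this as an immediate corollary of Proposition \ref{prop-KdV-var} with no written proof, and the chain-rule identity $\var{g_k}{v_x}=\var{r_k}{u}\big|_{u=v_x}$ together with the substitution $h_k=\frac{1}{4k+2}g_{k+1}$ is precisely the intended derivation. No gaps.
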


Before we proceed, let us formulate a simple Lemma.

\begin{lemma}\label{lemma-var-xx}
For any multi-index $I$ and for any differential polynomial $f[v]$ we have:
$$ \D_{x} \left( \var{f}{v_{Ix}} \right) = \der{f}{v_I} - \var{f}{v_I}.$$
\end{lemma}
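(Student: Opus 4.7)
The plan is a one-line reindexing argument directly from the definition of the variational derivative. Recall that, with $\var{}{}$ denoting $\var[1]{}{}$, we have for any multi-index $J$
$$\var{f}{v_J} = \sum_{\alpha \ge 0} (-1)^\alpha \D_x^\alpha \der{f}{v_{J x^\alpha}}.$$

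First I would apply this definition to $J = Ix$, obtaining
$$\var{f}{v_{Ix}} = \sum_{\alpha \ge 0} (-1)^\alpha \D_x^\alpha \der{f}{v_{I x^{\alpha+1}}},$$
and then apply $\D_x$ to both sides and substitute $\beta = \alpha + 1$:
$$\D_x \var{f}{v_{Ix}} = \sum_{\alpha \ge 0} (-1)^\alpha \D_x^{\alpha+1} \der{f}{v_{I x^{\alpha+1}}} = -\sum_{\beta \ge 1} (-1)^\beta \D_x^\beta \der{f}{v_{I x^\beta}}.$$

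Next I would observe that the right-hand sum, with the missing $\beta = 0$ term added in, is precisely $\var{f}{v_I}$. That is,
$$\sum_{\beta \ge 1} (-1)^\beta \D_x^\beta \der{f}{v_{I x^\beta}} = \var{f}{v_I} - \der{f}{v_I},$$
so combining with the previous display yields $\D_x \var{f}{v_{Ix}} = \der{f}{v_I} - \var{f}{v_I}$, as claimed.

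There is no real obstacle here: the only thing to be cautious about is the sign arising from shifting $\alpha \mapsto \alpha + 1$, which turns the overall $(-1)^\alpha$ into $-(-1)^\beta$. Since the sums are finite (as $f$ is a differential polynomial), no convergence issues arise, and the identity is a purely formal manipulation of the series defining $\var{}{v_J}$.
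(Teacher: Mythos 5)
Your proof is correct and follows essentially the same route as the paper's: a direct computation from the definition of the variational derivative, applying $\D_x$ term by term and observing that the resulting telescoped sum is $\der{f}{v_I} - \var{f}{v_I}$. The only difference is cosmetic — you carry out the reindexing $\beta = \alpha+1$ in summation notation where the paper writes out the first few terms explicitly — and your sign bookkeeping is accurate.
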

\begin{proof}
By direct calculation:
\begin{align*}
\D_{x} \left( \var{f}{v_x} \right) 
&= \D_x \left( \der{f}{v_{Ix}} - \D_x \der{f}{v_{Ix^2}} + \D_x^2 \der{f}{v_{Ix^3}} + \ldots \right) \\
&= \D_x \der{f}{v_{Ix}} - \D_x^2 \der{f}{v_{Ix^2}} + \D_x^2 \der{f}{v_{Ix^3}} + \ldots
=\der{f}{v_I} - \var{f}{v_I}. \qedhere
\end{align*}
\end{proof}

We can now  find Lagrangians for the the DPKdV equations.

\begin{prop} \label{thm Lagrangian PKdV}
The DPKdV equations are Lagrangian, with the Lagrange functions
$$
L_k[v] = \frac{1}{2}v_x v_{t_k} - h_k[v].
$$
\end{prop}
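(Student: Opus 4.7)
The goal is to compute the Euler--Lagrange expression $\var{L_k}{v}$ for $L_k = \tfrac12 v_x v_{t_k} - h_k[v]$ and verify it equals (up to sign) $v_{xt_k} - (g_k[v])_x$. The plan is to split the calculation into the two natural pieces and apply Corollary \ref{cor-KdV-var} and Lemma \ref{lemma-var-xx} to handle the $h_k$ piece.

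First I would treat the ``kinetic'' term $\tfrac12 v_x v_{t_k}$. Since it depends only on $v_x$ and $v_{t_k}$, the variational derivative with respect to $v$ reduces to
\[
-\D_x\!\left(\der{}{v_x}\bigl(\tfrac12 v_x v_{t_k}\bigr)\right) - \D_{t_k}\!\left(\der{}{v_{t_k}}\bigl(\tfrac12 v_x v_{t_k}\bigr)\right) = -\tfrac12 v_{xt_k} - \tfrac12 v_{xt_k} = -v_{xt_k}.
\]
Second I would handle $-h_k[v]$. Since the right-hand sides of PKdV depend only on $x$-derivatives of $v$ (not on $v$ itself), $h_k$ has $\partial h_k/\partial v = 0$. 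Applying Lemma \ref{lemma-var-xx} with $I$ empty and $f=h_k$ gives
\[
\var{h_k}{v} = -\D_x\!\left(\var{h_k}{v_x}\right),
\]
and Corollary \ref{cor-KdV-var} identifies $\var{h_k}{v_x} = g_k$. Thus $\var{h_k}{v} = -\D_x g_k = -(g_k[v])_x$.

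Combining the two pieces, $\var{L_k}{v} = -v_{xt_k} + (g_k[v])_x$, so the Euler--Lagrange equation $\var{L_k}{v}=0$ is precisely the DPKdV equation $v_{xt_k} = (g_k[v])_x$. The only real input beyond routine bookkeeping is Corollary \ref{cor-KdV-var}, which in turn rests on the nontrivial identity $\var{r_k}{u} = (4k-2)\,r_{k-1}$ from Proposition \ref{prop-KdV-var}; this is where the PKdV structure enters. The one subtlety to watch is the fact that $h_k$ has no $v$-dependence only through $v$-derivatives, which is what lets Lemma \ref{lemma-var-xx} convert $\var{h_k}{v}$ into a total $x$-derivative and thus produce precisely the right-hand side $(g_k)_x$; if $h_k$ contained an explicit $v$ term this reduction would fail. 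Beyond that, the argument is a direct computation, so no serious obstacle is expected.
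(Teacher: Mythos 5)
Your proof is correct and follows essentially the same route as the paper: split $L_k$ into the kinetic term (contributing $-v_{xt_k}$) and $-h_k$, then use the absence of explicit $v$-dependence in $h_k$ together with Lemma \ref{lemma-var-xx} and Corollary \ref{cor-KdV-var} to get $\var{h_k}{v} = -\D_x g_k$. The only difference is that you spell out the kinetic-term computation, which the paper leaves implicit.
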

\begin{proof}
Since $h_k = \frac{1}{4k + 2}g_{k+1}$ does not depend on $v$ directly, it follows from Lemma \ref{lemma-var-xx} and Corollary \ref{cor-KdV-var} that\begin{align*}
& \var{L_k}{v} = -v_{t_k x} - \var{h_k}{v} = - v_{t_k x} + \D_x\var{h_k}{v_x} = -v_{t_k x} + (g_k)_x. \qedhere
\end{align*}
\end{proof}

\section{Pluri-Lagrangian structure of PKdV hierarchy}\label{sect-PKdV-EL}

Since the individual KdV and PKdV equations are evolutionary (not variational), it seems not very plausible that they could have a pluri-Lagrangian structure. However, it turns out that the PKdV hierarchy as a whole is pluri-Lagrangian. Let us stress that this structure is only visible if one considers several PKdV equations simultaneously and not individually. We consider a finite-dimensional multi-time $\mathbb{R}^N$ parametrized by $t_1,t_2,\ldots , t_N$ supporting the first $N$ flows of the PKdV hierarchy. Recall that the first PKdV equation reads $v_{t_1} = v_{x}$, which allows us to identify $t_1$ with $x$.

The formulation of the main result involves certain differential polynomials introduced in the following statement.
\begin{lemma}\label{lemma-ab}
\begin{itemize}
\item
There exist differential polynomials $b_{ij}[v]$ depending on $v$ and $v_{x^\alpha}$, $\alpha>0$, such that 
\begin{equation}\label{eq def b}
\D_x(g_i) g_j = \D_x(b_{ij}).
\end{equation}

\item These polynomials satisfy
\begin{equation}\label{lemma-b+b}
b_{ij} + b_{ji} = g_i g_j.
\end{equation}

\item The differential polynomials $a_{ij}[v]$ (depending on $v_{x^\alpha}$ and $v_{x^\alpha t_j}$, $\alpha\ge 0$) defined by
\begin{equation}\label{eq def a}
a_{ij} := v_{t_j} \var[1]{h_i}{v_x} + v_{x t_j} \var[1]{h_i}{v_{xx}} + v_{xx t_j} \var[1]{h_i}{v_{xxx}} + \ldots
\end{equation}
satisfy 
\begin{equation}\label{lemma-Da}
\D_j(h_i) + \D_{x}(g_i) v_{t_j} = \D_x(a_{ij}).
\end{equation}
\end{itemize}
\end{lemma}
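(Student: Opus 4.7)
I would handle the three claims in the order (iii), (i), (ii): part (iii) is a direct consequence of Lemma \ref{lemma-var-xx} and Corollary \ref{cor-KdV-var}, part (i) rests on the classical Lenard recursion extracted from (\ref{resolvent}), and part (ii) then follows from part (i) by construction.

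For (iii), I would expand $\D_x a_{ij}$ via the Leibniz rule into two sums,
\[
\D_x a_{ij} = \sum_{\alpha \ge 0} v_{x^{\alpha+1} t_j} \var{h_i}{v_{x^{\alpha+1}}} + \sum_{\alpha \ge 0} v_{x^\alpha t_j} \D_x \var{h_i}{v_{x^{\alpha+1}}},
\]
and compare with the LHS of (\ref{lemma-Da}). Lemma \ref{lemma-var-xx} applied with $f = h_i$ and $I = x^\beta$ rewrites $\D_x \var{h_i}{v_{x^{\beta+1}}} = \der{h_i}{v_{x^\beta}} - \var{h_i}{v_{x^\beta}}$. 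The $-\var{h_i}{v_{x^\beta}}$ contributions (for $\beta \ge 1$) cancel, after reindexing $\beta = \alpha + 1$, against the first sum; the $\der{h_i}{v_{x^\beta}}$ terms for $\beta \ge 1$ reassemble into $\D_j h_i$, using that $h_i \propto g_{i+1}$ depends only on the $v_{x^\alpha}$ with $\alpha \ge 1$ (and not on $v$ itself or on any $v_{t_j^\gamma}$); the leftover $\beta = 0$ contribution is $v_{t_j} \D_x \var{h_i}{v_x} = v_{t_j} \D_x g_i$ by Corollary \ref{cor-KdV-var}.

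For (i), I would invoke the Lenard structure of KdV. Comparing like powers of $z$ in (\ref{resolvent}) yields the recurrence $(r_k)_x = (r_{k-1})_{xxx} + 4 u (r_{k-1})_x + 2 u_x r_{k-1}$, which under $u = v_x$ becomes
\[
\D_x g_k = \mathcal{B} g_{k-1}, \qquad \mathcal{B} := \D_x^3 + 4 v_x \D_x + 2 v_{xx},
\]
with $g_0 = \tfrac{1}{2}$ constant. A short integration by parts shows $\mathcal{B}$ is formally skew-adjoint, i.e., $(\mathcal{B} f) g + f(\mathcal{B} g)$ is a total $x$-derivative for any differential polynomials $f$, $g$. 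Hence modulo total $x$-derivatives,
\[
\D_x g_i \cdot g_j = (\mathcal{B} g_{i-1}) g_j \equiv -g_{i-1} (\mathcal{B} g_j) = -g_{i-1} \D_x g_{j+1} \equiv \D_x g_{i-1} \cdot g_{j+1}.
\]
Iterating the shift $(i,j) \mapsto (i-1, j+1)$ exactly $i$ times reduces the expression to $\D_x g_0 \cdot g_{i+j} = 0$, so $\D_x g_i \cdot g_j$ is a total $x$-derivative; I would then define $b_{ij}$ as its unique antiderivative with vanishing constant term. This is the main obstacle, being essentially the classical involutivity of the KdV conserved densities under the first Poisson structure of KdV.

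Part (ii) is then immediate: by the symmetric construction $\D_x b_{ji} = \D_x g_j \cdot g_i$, so $\D_x(b_{ij} + b_{ji}) = \D_x(g_i g_j)$. Since each $g_k$ with $k \ge 1$ vanishes at the zero jet, both $b_{ij} + b_{ji}$ and $g_i g_j$ are constant-term-free, and (\ref{lemma-b+b}) follows.
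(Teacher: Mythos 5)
Your proof is correct, and for parts (ii) and (iii) it coincides with the paper's: the closing computation of $\D_x(a_{ij})$ via Lemma \ref{lemma-var-xx} is exactly the paper's calculation (your bookkeeping of the $\alpha=0$ term through $\var[1]{h_i}{v_x}=g_i$ rather than through $\var[1]{h_i}{v}=-\D_x g_i$ is an immaterial variant), and the derivation of \eqref{lemma-b+b} from \eqref{eq def b} plus absence of constant terms is verbatim the paper's argument. The genuine difference is in part (i): the paper simply cites Dickey [3.7.9] for the existence of $b_{ij}$, whereas you supply a self-contained proof via the Lenard recursion $\D_x g_k = \mathcal{B}g_{k-1}$ with $\mathcal{B}=\D_x^3+4v_x\D_x+2v_{xx}$ skew-adjoint, descending $\D_x g_i\cdot g_j \equiv \D_x g_{i-1}\cdot g_{j+1}$ modulo image of $\D_x$ down to $\D_x g_0\cdot g_{i+j}=0$. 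This is the standard involutivity argument and is sound (the recurrence does follow from matching powers of $z$ in \eqref{resolvent}, the operator is indeed formally skew-adjoint, and each step adds an explicit total $x$-derivative in the variables $v_{x^\alpha}$ only, so the resulting antiderivative has the stated dependence); what it buys is independence from the external reference, at the cost of some length. One small presentational slip: in part (iii) the cancellation of the $-\var[1]{h_i}{v_{x^\beta}}$ terms against the first sum happens at matching index $\beta=\alpha\ge 1$ (the reindexing $\beta=\alpha+1$ belongs to the first sum, not to this comparison), but the identity you assert is the right one.
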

\begin{proof}
The existence of polynomials $b_{ij}$  is shown in \cite[3.7.9]{Dickey}. Since
$$\D_x \left( b_{ij} + b_{ji} \right) = \D_x(g_i) g_j + g_i \D_x(g_j) = \D_x(g_i g_j),$$
and since neither $b_{ij} + b_{ji}$ nor $g_i g_j$ contain constant terms, Equation (\ref{lemma-b+b}) follows. The last claim is a straightforward calculation using Lemma \ref{lemma-var-xx}:
\begin{align*}
\D_x(a_{ij}) &= \D_x \left(v_{t_j} \var[1]{h_i}{v_x} + v_{x t_j} \var[1]{h_i}{v_{xx}} + v_{xx t_j} \var[1]{h_i}{v_{xxx}} + \ldots \right) \\
&= v_{x t_j} \var[1]{h_i}{v_x} + v_{xx t_j} \var[1]{h_i}{v_{xx}} + v_{xxx t_j} \var[1]{h_i}{v_{xxx}} + \ldots \\
&\quad + v_{t_j} \D_x \left(\var[1]{h_i}{v_x}\right) + v_{x t_j} \D_x \left(\var[1]{h_i}{v_{xx}}\right) + v_{xx t_j} \D_x \left(\var[1]{h_i}{v_{xxx}}\right)+ \ldots \\
&= v_{x t_j} \var[1]{h_i}{v_x} + v_{xx t_j} \var[1]{h_i}{v_{xx}} + v_{xxx t_j} \var[1]{h_i}{v_{xxx}} + \ldots \\
&\quad - v_{t_j} \var[1]{h_i}{v} + v_{t_j} \der{h_i}{v} - v_{x t_j} \var[1]{h_i}{v_{x}} + v_{x t_j} \der{h_i}{v_{x}} - v_{xx t_j} \var[1]{h_i}{v_{xx}} + v_{xx t_j} \der{h_i}{v_{xx}} - \ldots \\
&= \D_j h_i - v_{t_j} \var[1]{h_i}{v} = \D_j h_i + \D_x(g_i) v_{t_j} \qedhere
\end{align*}
\end{proof}

Now we are in a position to give a pluri-Lagrangian formulation of the PKdV hierarchy. 
\begin{thm}\label{thm-kdvEL}
The multi-time Euler-Lagrange equations for the Lagrangian two-form $\cL = \sum_{i<j} L_{ij} \,\d t_i \wedge \d t_j$, with coefficients given by 
\begin{equation} \label{L1i}
L_{1i} := L_i = \frac{1}{2} v_x v_{t_i} - h_i 
\end{equation}
and 
\begin{equation}\label{Lij}
L_{ij} := \tfrac{1}{2} (v_{t_i} g_j - v_{t_j} g_i) + (a_{ij} - a_{ji}) - \tfrac{1}{2}(b_{ij} - b_{ji}) \quad \text{for} \quad j>i>1
\end{equation} 
are the first $N-1$ nontrivial PKdV equations
$$v_{t_2} = g_2, \quad v_{t_3} = g_3, \quad \ldots \quad v_{t_N} = g_N,$$
and equations that follow from these by differentiation.
\end{thm}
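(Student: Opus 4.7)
The approach is to plug the components \eqref{L1i} and \eqref{Lij} into the three families of multi-time Euler-Lagrange equations \eqref{pEL-2-0}--\eqref{pEL-2-2} from Theorem \ref{thm-pEL-2} and identify each equation, for every relevant multi-index $I$, with either a trivial identity, one of the PKdV flows $v_{t_k}=g_k$, or one of its differential consequences. The Sine-Gordon calculation in Section \ref{sect-SG} serves as the template: there the bulk, edge, and corner equations were enumerated one by one, and a similar bookkeeping has to be performed in the general case.

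The plan is to separate the calculation into three natural blocks. \textbf{Block 1} handles \eqref{pEL-2-0} for the pairs $(1,k)$. Because $L_{1k} = \tfrac12 v_x v_{t_k} - h_k$ only involves $v$ and its derivatives in the directions $x=t_1$ and $t_k$, the variational derivative $\var[1k]{L_{1k}}{v_I}$ is automatically zero whenever $I$ contains a $t_m$-derivative with $m\neq 1,k$; the only nontrivial instance $I=\emptyset$ is precisely the DPKdV equation, which was already shown to reduce to $v_{x t_k} = (g_k)_x$ in the proof of Proposition \ref{thm Lagrangian PKdV}. An $x$-integration, using that $g_k$ is a differential polynomial in $v_x, v_{xx}, \ldots$, delivers the $k$-th PKdV flow $v_{t_k} = g_k$.

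\textbf{Block 2} handles \eqref{pEL-2-0} for pairs $(i,j)$ with $1<i<j$. Here the identities of Lemma \ref{lemma-ab} do the work: \eqref{lemma-Da} rewrites $\D_j h_i$ and $\D_i h_j$ as $x$-derivatives of the $a$'s modulo terms proportional to $v_{t_i}$ or $v_{t_j}$, and \eqref{lemma-b+b} together with \eqref{eq def b} relates the $b$'s to $g_i g_j$. Substituting the PKdV flows from Block 1, the specific antisymmetric combination in \eqref{Lij} is engineered so that $L_{ij}$ becomes a total $x$-derivative of a differential polynomial in $v$ and $v_x, v_{xx},\ldots$ alone; variational derivatives $\var[ij]{}{v_I}$ of such exact $x$-derivatives vanish, so every equation of type \eqref{pEL-2-0} for these pairs is a differential consequence of Block 1. \textbf{Block 3} treats \eqref{pEL-2-1} and \eqref{pEL-2-2}. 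The edge equations either reduce to identities such as $\tfrac12 v_x = \tfrac12 v_x$ (as happens for Sine-Gordon in Section \ref{sect-SG}), or reproduce a PKdV flow itself via the leading term, which by Corollary \ref{cor-KdV-var} brings in $\var{h_i}{v_x}=g_i$. The corner equations collapse to identities because of the manifest antisymmetry $L_{ji}=-L_{ij}$ already built into \eqref{Lij}, mirroring the $0=0$ outcome at the bottom of the Sine-Gordon list.

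The main obstacle I anticipate is Block 2. The individual identities \eqref{lemma-b+b} and \eqref{lemma-Da} are ready to use, but combining them to show that $L_{ij}$ actually collapses to a total $x$-derivative on PKdV solutions — and that this collapse is enough to kill $\var[ij]{L_{ij}}{v_I}$ for every multi-index $I$ containing no $t_i, t_j$, and not merely for $I=\emptyset$ — requires a careful recursion on the order of $I$, during which one must systematically commute $\D_i^\alpha \D_j^\beta$ with the PKdV substitutions $v_{t_i}\mapsto g_i$. This is also where the antisymmetrization $a_{ij}-a_{ji}$, $b_{ij}-b_{ji}$ in \eqref{Lij} is essential: any asymmetry would spoil the cancellation, exactly as in Sine-Gordon where the precise coefficients were needed for closedness.
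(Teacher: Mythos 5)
Your overall shape (enumerate the three families \eqref{pEL-2-0}--\eqref{pEL-2-2} and sort each instance into trivial / PKdV / differentiated PKdV) matches the paper, but two of your key mechanisms are wrong. First, in Block 1 you claim the evolutionary flows $v_{t_k}=g_k$ are obtained by ``an $x$-integration'' of the DPKdV equations $v_{xt_k}=(g_k)_x$. That integration is not available: as a system of PDEs, $v_{xt_k}=(g_k)_x$ only gives $v_{t_k}=g_k$ up to an $x$-independent function, and the multi-time Euler--Lagrange equations do not let you discard it. In the paper the undifferentiated flows come from the \emph{edge} equations \eqref{pEL-2-1}: the equation $\var[1i]{L_{1i}}{v_{x}} = \var[ij]{L_{ji}}{v_{t_j}}$ reads $\tfrac12 v_{t_i}-\var[1]{h_i}{v_x}=\tfrac12 g_i-\var[1]{h_i}{v_x}$, where the right-hand side picks up $\tfrac12 g_i$ from the term $-\tfrac12 v_{t_j}g_i$ of $L_{ij}$ and $\var[1]{h_i}{v_x}$ from the leading term of $a_{ij}$; the variational-derivative terms cancel and the evolutionary equation $v_{t_i}=g_i$ pops out directly. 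This is precisely the ``remarkable'' feature the paper highlights, and your write-up misattributes its origin.

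Second, and more seriously, your Block 2 mechanism fails. You propose to substitute the PKdV flows into $L_{ij}$, observe that it becomes a total $x$-derivative, and conclude that $\var[ij]{L_{ij}}{v_{x^\alpha}}$ vanishes. Substitution of the equations of motion into the Lagrangian does not commute with taking variational derivatives, so this order of operations is illegitimate. Moreover, even granting the substitution, a total $x$-derivative does \emph{not} have vanishing $(i,j)$-variational derivative when $i,j>1$: the operator $\var[ij]{}{v_I}=\sum_{\alpha,\beta}(-1)^{\alpha+\beta}\D_i^\alpha\D_j^\beta\,\partial/\partial v_{Ii^\alpha j^\beta}$ contains no $\D_x$, so e.g.\ $\var[ij]{v_x}{v_x}=1\neq 0$. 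The paper's Lemma \ref{lemma-pELij} handles this step by an entirely different, indirect route: it invokes the closedness statement (Proposition \ref{prop-closedness}) on solutions of all PKdV equations except an auxiliary one $v_{t_k}=g_k$ with $k\notin\{1,i,j\}$, reads off $\partial L_{ij}/\partial v_{x^\alpha}$ by matching the coefficients of $v_{x^\alpha t_k}$ in $\D_k L_{ij}=\D_j L_{ik}-\D_i L_{jk}$, and then checks that the three surviving terms of $\var[ij]{L_{ij}}{v_{x^\alpha}}$ cancel identically (with the $N=3$ case inherited from $N\geq 4$). Without this, or an equally careful direct computation, the bulk equations for the pairs $(i,j)$ with $i,j>1$ remain unproved. (A minor further point: the corner equations \eqref{pEL-2-2} vanish because each individual term is zero --- $L_{1i}$ does not depend on mixed derivatives $v_{Ixt_i}$ and $L_{ij}$ not on $v_{It_it_j}$ --- not because of the antisymmetry $L_{ji}=-L_{ij}$.)
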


\subsection{Variational symmetries and the pluri-Lagrangian form}

Before proving Theorem \ref{thm-kdvEL}, let us give an heuristic derivation of expression \eqref{Lij} for $L_{ij}$. 
The ansatz is that different flows of the PKdV hierarchy should be variational symmetries of each other. (We are grateful to V. Adler who proposed  this derivation to us in a private communication.)

Fix two distinct integers $i,j \in \{ 2, 3, \ldots, N \}$. Consider the the $i$-th DPKdV equation, which is nothing but the conventional two-dimensional variational system generated in the $(x,t_i)$-plane by the Lagrange function 
$$
L_{1i} [v]= \frac{1}{2} v_x v_{t_i} - h_i[v].
$$
Consider the evolutionary equation $v_{t_j} = g_j[v]$, i.e., the $j$-th PKdV equation, and the corresponding generalized vector field
$$
\D_{g_j} := \sum_{I\not\ni j} (\D_I g_j)\frac{\partial}{\partial v_I}.
$$
We want to show that $\D_{g_j}$ is a variational symmetry of $L_{1i}$. For this end, we look for $L_{ij}$ such that
\begin{equation}\label{varsym-ansatz}
\D_{g_j} (L_{1i}) - \D_i \left(L_{1j}^{(g_j)}\right) +  \D_x (L_{ij})= 0.
\end{equation} 
Here, $L_{1j}^{(g_j)}$ is the Lagrangian defined by (\ref{L1i}) but with $v_{t_j}$ replaced by $g_j$:
$$
L_{1j}^{(g_j)} := \frac{1}{2} v_x g_j - h_j.
$$ 
We have:
\begin{align*}
& \D_i \left(L_{1j}^{(g_j)}\right) = \frac{1}{2} v_{t_i x} g_j + \frac{1}{2} v_x (g_j)_{t_i} - \D_i (h_j),\\
& \D_{g_j} (L_{1i}) = \frac{1}{2} (g_j)_x v_{t_i} + \frac{1}{2} v_x (g_j)_{t_i} - \D_{g_j}(h_i).
\end{align*}
Upon using (\ref{lemma-Da}) and (\ref{eq def b}), and introducing the polynomial
$$
a_{ij}^{(g_j)} := g_j \var[1]{h_i}{v_x} + (g_j)_x \var[1]{h_i}{v_{xx}} + (g_j)_{xx} \var[1]{h_i}{v_{xxx}} + \ldots
$$
obtained from $a_{ij}$ through the replacement of $v_{t_j}$ by $g_j$, we find:
\begin{align*}
\D_i \left(L_{1j}^{(g_j)}\right) - \D_{g_j} (L_{1i}) 
&= \frac{1}{2} v_{t_i x} g_j - \frac{1}{2} (g_j)_x v_{t_i} - \D_i (h_j) + \D_{g_j}(h_i) \\
&= \frac{1}{2} v_{t_i x} g_j - \frac{1}{2} (g_j)_x v_{t_i} - \big( a_{ji} \big)_x + (g_j)_x v_{t_i} + \left( a_{ij}^{(g_j)} \right)_x - (g_i)_x g_j \\
&= \frac{1}{2} v_{t_i x} g_j + \frac{1}{2} (g_j)_x v_{t_i} - \left( a_{ji} - a_{ij}^{(g_j)} \right)_x - (g_i)_x g_j \\
&= \frac{1}{2} (v_{t_i}g_j)_x + \left( a_{ij}^{(g_j)} - a_{ji} \right)_x - (b_{ij})_x.
\end{align*}
We denote the antiderivative with respect to $x$ of this quantity by
$$L_{ij}^{(i)} := \frac{1}{2} v_{t_i}g_j + \left( a_{ij}^{(g_j)} - a_{ji} \right) - b_{ij}.$$
The analogous calculation with coordinates $x$ and $t_j$ yields
$$ \D_{g_i} (L_{1j}) - \D_j\left(L_{1i}^{(g_i)}\right) = -\frac{1}{2} (v_{t_j}
g_i)_x + \left( a_{ij} - a_{ji}^{(g_i)} \right)_x + (b_{ji})_x.$$
We denote its antiderivative by
$$L_{ij}^{(j)} := -\frac{1}{2} v_{t_j}
g_i + \left( a_{ij} - a_{ji}^{(g_i)} \right) + b_{ji}. $$

Now we look for a differential polynomial $L_{ij}[v]$ depending on the partial derivatives of $v$ with respect to $x$, $t_i$ and $t_j$ that reduces to $L_{ij}^{(i)}$ and to $L_{ij}^{(j)}$ after the substitutions $v_{t_j} = g_j$ and $v_{t_i} = g_i$, respectively. It turns out that there is a one-parameter family of such functions, given by 
$$ L_{ij} = c v_{t_i}v_{t_j} + (a_{ij} - a_{ji}) + \left( \tfrac{1}{2} - c  \right) v_{t_i} g_j - \left( \tfrac{1}{2} + c  \right) v_{t_j} g_i + \tfrac{1}{2}(b_{ji} - b_{ij}) + c g_i g_j$$
for $c \in \mathbb{R}$. Checking this is a straightforward calculation using Equation (\ref{lemma-b+b}). Our theory does not depend in any essential way on the choice of $L_{ij}$ within this family. For aesthetic reasons we chose $c=0$, which gives us Equation \eqref{Lij}.

\begin{remark}
We could also take $\cL$ to be the $c$-linear part of the form we have just obtained, i.e.\@ $\cL=\sum_{i<j}  (v_{t_i} - g_i)(v_{t_j} - g_j) \,\d t_i \wedge \d t_j$. One can think of this as choosing $c = \infty$. Such a two-form $\cL$ can be considered for any family of evolutionary equations $v_{t_i}=g_i[v]$. However,  due to the vanishing components $L_{1i}$, this form $\cL$ has no relation to the classical variational formulation of the individual differential equations $v_{xt_i}=(g_i)_x$.
\end{remark}

Eventually, Equation \eqref{varsym-ansatz} leads to the following closedness property.

\begin{prop}\label{prop-closedness}
The two-form $\cL = \sum_{i<j} L_{ij} \,\d t_i \wedge \d t_j$, with coefficients given by \eqref{L1i} and \eqref{Lij}, is closed as soon as $v$ solves all but one of the PKdV equations $v_{t_2} = g_2$, \ldots, $v_{t_N} = g_N$.
\end{prop}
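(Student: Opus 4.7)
The plan is to verify vanishing of each coefficient of $\d\cL$ separately. Expanding in coordinates,
$$\d\cL = \sum_{i<j<k} K_{ijk}\,\d t_i \wedge \d t_j \wedge \d t_k, \qquad K_{ijk} := \D_i L_{jk} - \D_j L_{ik} + \D_k L_{ij}.$$
I would first dispose of the triples containing $t_1$, then bootstrap to the remaining triples via a Jacobi-type identity coming from $\d^2 = 0$.

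For a triple $(1,i,j)$ with $i,j>1$, the key input is the polynomial identity
$$\D_i\bigl(L_{1j}^{(g_j)}\bigr) - \D_{g_j}(L_{1i}) = \D_x\bigl(L_{ij}^{(i)}\bigr)$$
already extracted in the heuristic derivation preceding the proposition; it holds at the level of differential polynomials and expresses that $\D_{g_j}$ is a variational symmetry of $L_{1i}$. A short check using $b_{ij}+b_{ji}=g_ig_j$ from Lemma \ref{lemma-ab} shows that $L_{ij}|_{v_{t_j}=g_j}=L_{ij}^{(i)}$. Since, on any solution of the single equation $v_{t_j}=g_j$, one has $\D_{g_j} F = \D_j F$ for every differential polynomial $F[v]$ not involving pure $t_j$-derivatives (in particular for $F=L_{1i}$), this identity reduces to $\D_x L_{ij} = \D_i L_{1j} - \D_j L_{1i}$, i.e.\@ $K_{1ij}=0$. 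The symmetric version in which $i$ and $j$ swap roles yields the same conclusion on any solution of $v_{t_i}=g_i$.

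For a triple $(i,j,k)$ with $1<i<j<k$, I would exploit the Jacobi-type identity
$$\D_1 K_{ijk} = \D_i K_{1jk} - \D_j K_{1ik} + \D_k K_{1ij},$$
which is nothing but $\d^2\cL=0$ read off in the four-form direction $(1,i,j,k)$. If the omitted PKdV equation is $v_{t_\ell}=g_\ell$ and $\ell\notin\{i,j,k\}$, all three terms on the right vanish on solutions by the preceding step. If $\ell\in\{i,j,k\}$, say $\ell=k$, then $K_{1ij}$ vanishes using either of the remaining equations $v_{t_i}=g_i$ or $v_{t_j}=g_j$, while $K_{1jk}$ and $K_{1ik}$ vanish using $v_{t_j}=g_j$ and $v_{t_i}=g_i$ respectively. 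In every case $\D_1 K_{ijk}=0$ on the relevant solution manifold.

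It remains to upgrade $\D_x K_{ijk}=0$ to $K_{ijk}=0$. Here I would eliminate $v_{It_m}\mapsto\D_I g_m$ for each $m\neq\ell$ and note that $\D_x$ commutes with these substitutions, so that the reduced $K_{ijk}$ lies in the polynomial ring generated by the independent quantities $v_{x^\alpha}$ and $v_{x^\alpha t_\ell^\beta}$ and is annihilated by $\D_x$ there. Since $\D_x$ has trivial kernel on this ring apart from pure constants, and every $L_{pq}$ (hence $K_{ijk}$) vanishes at $v=0$, we conclude $K_{ijk}=0$ on solutions. I expect this last reduction step to be the main obstacle: one must verify carefully that the elimination is consistent (no circular dependence on the skipped direction) and that the reduced $\D_x$ really is injective on the remaining polynomial ring, since higher $t_\ell$-derivatives of $v$ survive as free variables after reduction.
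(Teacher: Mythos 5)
Your proof is correct and follows essentially the same route as the paper: first show that the coefficients $K_{1jk}$ vanish under either one of the two relevant equations, then deduce $\D_x K_{ijk}=0$ for triples not containing $t_1$ from commutativity of total derivatives (the paper writes this directly as $\D_x M_{ijk}=\D_k(\D_i L_{1j}-\D_j L_{1i})-\D_j(\D_i L_{1k}-\D_k L_{1i})+\D_i(\D_j L_{1k}-\D_k L_{1j})=0$ rather than via the four-form identity, but it is the same computation), and finally conclude $K_{ijk}=0$ from the absence of constant terms, i.e.\@ the triviality of the kernel of $\D_x$ after elimination, which you spell out in more detail than the paper does. The only cosmetic difference is in the base case: the paper re-derives $M_{1jk}$ by a direct computation yielding the factored form $\tfrac{1}{2}(v_{t_j}-g_j)\D_x(v_{t_k}-g_k)-\tfrac{1}{2}(v_{t_k}-g_k)\D_x(v_{t_j}-g_j)$, whereas you cite the variational-symmetry identity and the reduction $L_{ij}|_{v_{t_j}=g_j}=L_{ij}^{(i)}$ already established in the preceding subsection.
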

\begin{proof}
We use the notation
\begin{equation}\label{eq dL}
\d \cL = \sum_{i < j < k} M_{ijk} \, \d t_i \wedge \d t_j \wedge \d t_k, \quad M_{ijk}=\D_k (L_{ij}) - \D_j (L_{ik}) + \D_i (L_{jk})
\end{equation}
We start by showing that $M_{1jk}=\D_k (L_{1j}) - \D_j (L_{1k}) + \D_{x} (L_{jk})$ vanishes as soon as either $v_{t_j} = g_j$ or $v_{t_k} = g_k$ is satisfied. Indeed, we have:
\begin{align}
M_{1jk}  & = \D_k  L_{1j} - \D_j L_{1k} + \D_x L_{jk} \notag\\
&= \frac{1}{2} v_{t_j t_k} v_x + \frac{1}{2} v_{t_j} v_{x t_k} - \D_k h_j - \frac{1}{2} v_{t_j t_k} v_x - \frac{1}{2} v_{t_k} v_{x t_j} + \D_j h_k \notag \\
&\qquad + \frac{1}{2} \big( v_{x t_j} g_k + v_{t_j} \D_x g_k - v_{x t_k} g_j - v_{t_k} \D_x g_j \big) \notag \\
&\qquad + \D_k h_j + v_{t_k} \D_x g_j - \D_j h_k - v_{t_j} \D_x g_k - \frac{1}{2}( g_k \D_x g_j - g_j \D_x g_k ) \notag \\
&= \frac{1}{2} \big(v_{t_j} v_{x t_k} - v_{t_k} v_{x t_j} + v_{x t_j} g_k - v_{t_j} \D_x g_k - v_{x t_k} g_j + v_{t_k} \D_x g_j - g_k \D_x g_j + g_j \D_x g_k \big) \notag\\
&= \frac{1}{2} (v_{t_j} - g_j)\D_x(v_{t_k} - g_k) - \frac{1}{2} (v_{t_k} - g_k) \D_x(v_{t_j} - g_j). \label{kdv-closed}
\end{align}
For the case $i,j,k > 1$, we assume without loss of generality that $v_{t_i} = g_i$ and $v_{t_j} = g_j$ are satisfied. We do not assume that $v_{t_k}=g_k$ holds, and correspondingly we do not make any identification involving $v_{t_k}$, $v_{x t_k}$, \ldots. Using Equation \eqref{kdv-closed}, we find:
\begin{align*}
\D_x M_{ijk}  & = \D_x  \left( \D_k (L_{ij}) - \D_j (L_{ik}) + \D_i (L_{jk}) \right)\\
&= \D_k \left( \D_i (L_{1j}) - \D_j (L_{1i}) \right) 
 - \D_j \left( \D_i (L_{1k}) - \D_k (L_{1i}) \right) 
+ \D_i \left( \D_j (L_{1k}) - \D_k (L_{1j}) \right) \\
&= 0.
\end{align*}
Since these polynomials do not contain constant terms, it follows that 
\[ \D_k (L_{ij}) - \D_j (L_{ik}) + \D_i (L_{jk}) = 0. \qedhere \]
\end{proof}

\begin{remark}
Assuming that the statement of Theorem \ref{thm-kdvEL} holds true, one can easily prove a somewhat weaker claim than Proposition \ref{prop-closedness}, namely that  the two-form $\cL$ is closed on simultaneous solutions of \emph{all} the PKdV equations. Indeed, by Proposition \ref{prop-dLconst}, $\d \cL$ is constant on solutions of the multi-time Euler-Lagrange equations $v_{t_i} = g_i$. Vanishing of this constant follows from the fact that $\d \cL=0$ on the trivial solution $v \equiv 0$.
\end{remark}

\subsection{The multi-time Euler-Lagrange equations}

\begin{proof}[Proof of Theorem \ref{thm-kdvEL}]
 We check all multi-time Euler-Lagrange equations \eqref{pEL-2-0}--\eqref{pEL-2-2} individually. If $N>3$, we fix $k > j > i >1$. If $N=3$, we take $j=3$, $i=2$, and in the following ignore all equations containing $k$. We use the convention $L_{ji} = - L_{ij}$, etc.

\subsubsection*{Equations (\ref{pEL-2-2})}

\begin{itemize}
\item The equations 
$$\var[1i]{L_{1i}}{v_{Ix t_i}} + \var[ij]{L_{ij}}{v_{It_i t_j}} + \var[1j]{L_{j1}}{v_{It_j x}} = 0$$
and
$$\var[ij]{L_{ij}}{v_{It_i t_j}} + \var[jk]{L_{jk}}{v_{It_j t_k}} + \var[ki]{L_{ki}}{v_{It_k t_i}}= 0$$
are trivial because all terms vanish.
\end{itemize}

\subsubsection*{Equations (\ref{pEL-2-1})}

\begin{itemize}
\item The equation
$$\var[1i]{L_{1i}}{v_{x}} = \var[ij]{L_{ji}}{v_{t_j}}$$
yields
\begin{align*}
\frac{1}{2} v_{t_i} - \var[1i]{h_i}{v_x} &= \frac{1}{2} g_i - \var[ij]{a_{ij}}{v_{t_j}} \\
&= \frac{1}{2} g_i  - \var[ij]{}{v_{t_j}} \left( v_{t_j} \var[1]{h_i}{v_x} +  v_{t_j x} \var[1]{h_i}{v_{xx}} +  v_{t_j xx} \var[1]{h_i}{v_{xxx}} +\ldots \right) \\
&= \frac{1}{2} g_i - \var[1]{h_i}{v_x}.
\end{align*}
This simplifies to the PKdV equation 
\begin{equation}\label{EL-EPKDdVi}
v_{t_i} = g_i.
\end{equation}

\item For $\alpha > 0$, the equation
$$\var[1i]{L_{1i}}{v_{x^{\alpha+1}}} = \var[ij]{L_{ji}}{v_{t_j x^\alpha}}$$
yields
\begin{align*}
-\var[1i]{h_i}{v_{x^{\alpha+1}}}
&= -\var[ij]{}{v_{t_j x^\alpha}} \left( v_{t_j} \var[1]{h_i}{v_x} +  v_{t_j x} \var[1]{h_i}{v_{xx}} 
+  v_{t_j xx} \var[1]{h_i}{v_{xxx}} 
+\ldots \right) \\
&= -\var[1]{h_i}{v_{x^{\alpha+1}}},
\end{align*}
which is trivial.

\item Similarly, the equation
$$\var[1j]{L_{1j}}{v_{x}} = \var[ij]{L_{ij}}{v_{t_i}}$$
yields PKdV equation
\begin{equation}\label{EL-EPKDdVj}
v_{t_j} = g_j,
\end{equation}
and for $\alpha > 0$, the equation
$$\var[1j]{L_{1j}}{v_{x^{\alpha+1}}} = \var[ij]{L_{ij}}{v_{t_i x^\alpha}}$$
is trivial.

\item All equations of the form 
$$\var[1i]{L_{1i}}{v_{x I}} = \var[ij]{L_{ji}}{v_{t_j I}} \quad (t_i \not\in I) \qquad \text{and} \qquad \var[1j]{L_{1j}}{v_{x I}} = \var[ij]{L_{ij}}{v_{t_i I}} \quad (t_j \not\in I)$$ 
where $I$ contains any $t_l$ ($l > 1$) are trivial because each term is zero.

\item The equations 
$$\var[1i]{L_{1i}}{v_{It_i}} = \var[1j]{L_{1j}}{v_{It_j}} \qquad (x \not\in I)$$
are trivial because both sides are zero for nonempty $I$ and both are equal to $\frac{1}{2} v_x$ for empty $I$.
\end{itemize}

\subsubsection*{ Equations (\ref{pEL-2-0})}

\begin{itemize}
\item By construction, the equations $\displaystyle \var[1i]{L_{1i}}{v} = 0$ for $i >1$ are the equations \begin{equation}\label{EL-PKDdV}
v_{x t_i} = \D_x g_i.
\end{equation}
For $I$ containing any $t_l$, $l > 1$, $l \neq i$, the equations $\displaystyle \var[1i]{L_{1i}}{v_{t_I}} = 0 $ are trivial.

\item The last family of equations we discuss as a lemma because its calculation is far from trivial.

\begin{lemma}\label{lemma-pELij}
The equations $\displaystyle \var[ij]{L_{ij}}{v_{x^\alpha}} = 0$ are corollaries of the PKdV equations.
\end{lemma}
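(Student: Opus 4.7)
The plan is to compute $\var[ij]{L_{ij}}{v_{x^\alpha}}$ directly from the definition \eqref{Lij} and reduce it to zero modulo the PKdV equations $v_{t_k} = g_k$ and their $x$-derivatives, using the identities collected in Lemma \ref{lemma-ab}, Corollary \ref{cor-KdV-var}, and Lemma \ref{lemma-var-xx}. A key preliminary observation is that the jet dependence of $L_{ij}$ is confined to the variables $v_{x^\alpha}$, $v_{t_i x^\beta}$, and $v_{t_j x^\beta}$: no mixed $t_i t_j$-derivatives, no pure $t_i^2$- or $t_j^2$-derivatives, and no derivatives with respect to other $t_l$'s appear. The multi-time Euler--Lagrange operator therefore collapses to
\[
\var[ij]{L_{ij}}{v_{x^\alpha}} = \der{L_{ij}}{v_{x^\alpha}} - \D_i \der{L_{ij}}{v_{t_i x^\alpha}} - \D_j \der{L_{ij}}{v_{t_j x^\alpha}},
\]
reducing the task to the evaluation of three partial derivatives.

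The two $t$-derivatives are easy to read off. From the definition of $a_{ij}$, $a_{ji}$ and the linear contributions $v_{t_i} g_j$ and $v_{t_j} g_i$, one obtains $\der{L_{ij}}{v_{t_i x^\alpha}} = -\var[1]{h_j}{v_{x^{\alpha+1}}}$ (augmented by $\tfrac{1}{2} g_j$ when $\alpha = 0$) and, analogously, $\der{L_{ij}}{v_{t_j x^\alpha}} = \var[1]{h_i}{v_{x^{\alpha+1}}}$ (augmented by $-\tfrac{1}{2} g_i$ when $\alpha = 0$). Corollary \ref{cor-KdV-var} simplifies $\var[1]{h_k}{v_x}$ to $g_k$, so in the $\alpha = 0$ case these reduce to $-\tfrac{1}{2} g_j$ and $\tfrac{1}{2} g_i$. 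Once PKdV is imposed, $\D_i$ and $\D_j$ act on any $x$-differential polynomial as the evolutionary derivations $\D_{g_i}$ and $\D_{g_j}$, so for $\alpha = 0$ the combination $-\D_i(-\tfrac{1}{2}g_j) - \D_j(\tfrac{1}{2}g_i) = \tfrac{1}{2}(\D_{g_i} g_j - \D_{g_j} g_i)$ vanishes by the commutativity of the PKdV flows, while $\der{L_{ij}}{v} = 0$ since $g_k$, $h_k$, $a_{ij}$, and $b_{ij}$ depend on $v$ only through its $x$-derivatives. For $\alpha > 0$ the same mechanism is at work, now with $\var[1]{h_j}{v_{x^{\alpha+1}}}$ and $\var[1]{h_i}{v_{x^{\alpha+1}}}$ playing the role of $g_j$ and $g_i$.

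The bulk of the work lies in the third piece, $\der{L_{ij}}{v_{x^\alpha}}$, which collects explicit jet dependence from $g_i$, $g_j$, $b_{ij}$, $b_{ji}$, and from the coefficients $\var[1]{h_i}{v_{x^{\beta+1}}}$, $\var[1]{h_j}{v_{x^{\beta+1}}}$ hidden inside $a_{ij}$ and $a_{ji}$. To control it I would eliminate the $b$-terms using \eqref{eq def b} and \eqref{lemma-b+b} in favour of $g_i g_j$ and its $x$-derivatives, apply Lemma \ref{lemma-var-xx} (which rewrites $\D_x \var[1]{f}{v_{Ix}}$ as $\der{f}{v_I} - \var[1]{f}{v_I}$) wherever the product rule generates such expressions, and use \eqref{lemma-Da} to replace $\D_j h_i$ by $\D_x(a_{ij}) - (g_i)_x v_{t_j}$. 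After substituting the PKdV equations and their $x$-differential consequences, all remaining terms should cancel systematically. The principal obstacle is the combinatorial bookkeeping: the number of terms is substantial, and one must carefully track how $\D_i$ and $\D_j$ interact with the Euler--Lagrange operator $\var[1]{}{v_{x^{\alpha+1}}}$, since applying $\D_i$ produces $v_{t_i x^\beta}$ factors that must then be reabsorbed via PKdV. Conceptually the cancellation is forced by the variational-symmetry identity $\D_{g_j}(L_{1i}) - \D_i(L_{1j}^{(g_j)}) + \D_x(L_{ij}^{(i)}) = 0$ that drove the construction of \eqref{Lij} in the first place; translating this provenance into an explicit cancellation within the Euler--Lagrange operator is the main technical challenge.
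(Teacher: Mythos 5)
Your reduction of $\var[ij]{L_{ij}}{v_{x^\alpha}}$ to the three terms $\der{L_{ij}}{v_{x^\alpha}} - \D_i \der{L_{ij}}{v_{x^\alpha t_i}} - \D_j \der{L_{ij}}{v_{x^\alpha t_j}}$ is exactly the paper's starting point, and your evaluation of the two $t$-derivative terms agrees with the paper's. The gap is in the third term. You propose to compute $\der{L_{ij}}{v_{x^\alpha}}$ directly from \eqref{Lij}, eliminating the $b$-terms via \eqref{eq def b} and \eqref{lemma-b+b}; but those identities only determine the symmetric combination $b_{ij}+b_{ji}=g_ig_j$ and the $x$-derivative $\D_x b_{ij}=\D_x(g_i)\,g_j$, whereas $L_{ij}$ contains the antisymmetric combination $b_{ij}-b_{ji}$, which is defined only as an antiderivative. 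Since the operator you need here is the partial derivative $\der{}{v_{x^\alpha}}$ (not a variational derivative, which would annihilate total $x$-derivatives), knowing $\D_x(b_{ij}-b_{ji})$ does not give you $\der{(b_{ij}-b_{ji})}{v_{x^\alpha}}$: the operators do not commute, since $\der{}{v_{x^\alpha}}\D_x = \D_x\der{}{v_{x^\alpha}} + \der{}{v_{x^{\alpha-1}}}$, and unwinding that recursion is precisely the ``combinatorial bookkeeping'' you defer. The same difficulty affects the contributions of the coefficients $\var[1]{h_i}{v_{x^{\beta+1}}}$ inside $a_{ij}$ and $a_{ji}$. So as written the plan stalls at the decisive step; ``all remaining terms should cancel systematically'' is the statement to be proved, not an argument for it.

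The paper circumvents this with an indirect determination of $\der{L_{ij}}{v_{x^\alpha}}$ that you should consider adopting: assume $N\ge4$, fix $k>1$ distinct from $i,j$, and let $v$ solve all PKdV equations except $v_{t_k}=g_k$. Proposition \ref{prop-closedness} then gives $\sum_I \der{L_{ij}}{v_I}\,v_{It_k} = \D_k L_{ij} = \D_j L_{ik} - \D_i L_{jk}$, and since $L_{ij}$ involves no $t_k$-derivatives, $\der{L_{ij}}{v_{x^\alpha}}$ can be read off as the coefficient of $v_{x^\alpha t_k}$ on the right-hand side, where only the explicitly known $a$-terms and the $\tfrac12 g\,v_{t_k}$ terms of $L_{ik}$ and $L_{jk}$ contribute. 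This yields $\der{L_{ij}}{v_{x^\alpha}} = -\D_i\bigl(\var[1]{h_j}{v_{x^{\alpha+1}}}\bigr) + \D_j\bigl(\var[1]{h_i}{v_{x^{\alpha+1}}}\bigr)$ modulo $v_{t_i}=g_i$ and $v_{t_j}=g_j$, which cancels your two $t$-derivative terms exactly; the case $N=3$ then follows because the equation $\var[23]{L_{23}}{v_{x^\alpha}}=0$ does not depend on $N$.
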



%

\begin{proof}[Proof of Lemma \ref{lemma-pELij}]
From Equation \eqref{Lij} we see that the variational derivative of $L_{ij}$ contains only three nonzero terms,
\begin{equation}\label{varderij}
\var[ij]{L_{ij}}{v_{x^\alpha}} = \der{L_{ij}}{v_{x^\alpha}} - \D_{i} \left(\der{L_{ij}}{v_{x^\alpha t_i}} \right) - \D{j} \left(\der{L_{ij}}{v_{x^\alpha t_j}} \right). 
\end{equation}
To determine the first term we use an indirect method. Assume that the dimension of multi-time $N$ is at least 4 and fix $k > 1$ distinct from $i$ and $j$. Let $v$ be a solution of all PKdV equations except $v_{t_k} = g_k$. By Proposition \ref{prop-closedness} we have
\begin{equation}\label{kder}
\sum_I \der{L_{ij}}{v_I} v_{I t_k} = \D_{k} L_{ij} = \D_{j} L_{ik} - \D_{i} L_{jk}.
\end{equation}
Since $\der{L_{ij}}{v_I}$ does not contain any derivatives with respect to $t_k$, we can determine $\der{L_{ij}}{v_{x^\alpha}}$ by looking at the terms in the right hand side of Equation \eqref{kder} containing $v_{x^\alpha t_k}$. These are
\begin{align*}
&\quad \D_{j} \left( - \tfrac{1}{2} g_i v_{t_k} + v_{t_k} \var[1]{h_i}{v_x} + v_{x t_k} \var[1]{h_i}{v_{xx}} + \ldots \right) \\
&- \D_{i} \left( - \tfrac{1}{2} g_j v_{t_k} + v_{t_k} \var[1]{h_j}{v_x} + v_{x t_k} \var[1]{h_j}{v_{xx}} + \ldots \right).
\end{align*}
Now we expand the brackets. By again throwing out all terms that do not contain any $v_{x^\alpha t_k}$, and those that cancel modulo $v_{t_i} = g_i$ or $v_{t_j} = g_j$, we get
\begin{align*}
& - v_{t_k} \D_{j}\left( \var[1]{h_i}{v_x} \right) + v_{x t_k} \D_{j}\left( \var[1]{h_i}{v_{xx}} \right) + v_{xx t_k} \D_{j}\left( \var[1]{h_i}{v_{xxx}} \right) + \ldots  \\
& + v_{t_k} \D_{i}\left( \var[1]{h_j}{v_x} \right) - v_{x t_k} \D_{i}\left( \var[1]{h_j}{v_{xx}} \right) - v_{x t_k} \D_{i}\left( \var[1]{h_j}{v_{xxx}} \right) - \ldots .
\end{align*}
Comparing this to Equation \eqref{kder}, we find that
$$\der{L_{ij}}{v_{x^\alpha}} = -\D_{i}\left( \var[1]{h_j}{v_{x^{\alpha+1}}} \right) + \D_{j}\left( \var[1]{h_i}{v_{x^{\alpha+1}}} \right). $$

On the other hand we have
$$- \D_{i} \left(\der{L_{ij}}{v_{x^\alpha t_i}} \right) - \D_{j} \left(\der{L_{ij}}{v_{x^\alpha t_j}} \right)
= \D_{i} \left(\var[1]{h_j}{v_{x^{\alpha+1}}} \right) - \D_{j} \left(\var[1]{h_i}{v_{x^{\alpha+1}}} \right),$$
so Equation \eqref{varderij} implies that $\displaystyle \var[ij]{L_{ij}}{v_{x^\alpha}} = 0$ for any $\alpha$.

Since $\displaystyle \var[23]{L_{23}}{v_{x^\alpha}} = 0$ does not depend on the dimension $N \geq 3$, the result for $N \geq 4$ implies the claim for $N = 3$.
\end{proof}
\end{itemize}

\noindent This concludes the proof of Theorem \ref{thm-kdvEL}.
\end{proof}

\section{Relation to Hamiltonian formalism}\label{sect-Hamiltonian}

In this last section, we briefly discuss the connection between the closedness of $\cL$ and the involutivity of the corresponding Hamiltonians. 

In Proposition \ref{prop-dLconst} we saw that $\d \mathcal{L}$ is constant on solutions. For the one-dimensional case ($d=1$) with $\mathcal{L}$ depending on the first jet bundle only, it has been shown in \cite{S} that this is equivalent to the commutativity of the corresponding Hamiltonian flows. If the constant is zero then the Hamiltonians are in involution. Now we will prove a similar result for the two-dimensional case.

We will use a Poisson bracket on \emph{formal integrals}, i.e.\@ equivalence classes of functions modulo $x$-derivatives \cite[Chapter 1--2]{Dickey}. In this section, the integral sign $\int$ will always denote an equivalence class, not an integration operator. The Poisson bracket due to Gardner-Zakharov-Faddeev is defined by 
$$ \left\{ \tint F, \tint G \right\} 
= \int \left( \D_x\var[1]{F}{u} \right)\var[1]{G}{u}.$$
Using integration by parts, we see that this bracket is anti-symmetric. Less obvious is the fact that it satisfies the Jacobi identity \cite[Chapter 7]{Olver}. 
As we did when studying the KdV hierarchy, we introduce a potential $v$ that satisfies $v_x = u$, and we identify the space-coordinate $x$ with the first coordinate $t_1$ of multi-time. We can now re-write the Poisson bracket as
\begin{equation}\label{poisson}
\left\{ \tint F, \tint G \right\} 
= \int \left( \D_x \var[1]{F}{v_x} \right) \var[1]{G}{v_x} 
= - \int \var[1]{F}{v} \var[1]{G}{v_x},
\end{equation}
for functions $F$ and $G$ that depend on the $x$-derivatives of $v$ but not on $v$ itself.

Assume that the coefficients $L_{1j}$ of the Lagrangian two-from $\cL$ are given by
$$L_{1j} = \tfrac{1}{2} v_x v_{t_j} - h_j,$$
where $h_j$ is a differential polynomial in $v_x, v_{xx}, \ldots$. This is the case for the PKdV hierarchy. The $L_{1j}$ are Lagrangians of the equations 
$$v_{x t_j} = \D_x g_j \qquad \text{or} \qquad u_{t_j} = \D_x g_j,$$
where $g_j := \var[1]{h_j}{v_x}$, hence $\var[1]{h_j}{v} = - \D_x g_j$. It turns out that the formal integral $\tint h_j$ is the Hamilton functional for the equation $u_{t_j} = \D_x g_j$ with respect to the Poisson bracket \eqref{poisson}. Formally:
$$ \left\{ \tint h_j, u(y) \right\} 
 =\left\{ \tint h_j, \tint u \,\delta(\cdot - y) \right\} 
= -\int \var[1]{h_j}{v} \delta(x - y) \
= \D_x g_j (y),$$
where $\delta$ denotes the Dirac delta.

\begin{thm}
If $\d \cL = 0$ on solutions, then the Hamiltonians are in involution, 
$$\left\{ \tint h_i, \tint h_j \right\} = 0.$$
\end{thm}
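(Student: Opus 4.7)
My plan is to apply the closedness condition $\d\cL = 0$ to the triple of indices $(1,i,j)$ with $1 < i < j$. By \eqref{eq dL}, this means that on solutions
\[
\D_j L_{1i} - \D_i L_{1j} + \D_x L_{ij} = 0.
\]
The first step is to pass to formal $x$-integrals: $\int \D_x L_{ij} = 0$ for any differential polynomial $L_{ij}$, so there remains the identity $\int \D_j L_{1i} = \int \D_i L_{1j}$, valid on any solution.

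The next step is to unpack this using $L_{1i} = \tfrac12 v_x v_{t_i} - h_i$. Expanding $\D_j L_{1i}$ by the product rule yields $\tfrac12 v_{x t_j} v_{t_i} + \tfrac12 v_x v_{t_i t_j} - \D_j h_i$. The symmetric mixed term $\tfrac12 v_x v_{t_i t_j}$ cancels directly in the antisymmetric difference $\D_j L_{1i} - \D_i L_{1j}$, leaving
\[
\tfrac12 \int \bigl( v_{x t_j} v_{t_i} - v_{x t_i} v_{t_j} \bigr) = \int (\D_j h_i - \D_i h_j).
\]
Substituting the PKdV equations $v_{t_k} = g_k$ (and hence $v_{x t_k} = \D_x g_k$) and integrating by parts once more in $x$, the left-hand side collapses to $-\int g_j \D_x g_i = -\{\tint h_i, \tint h_j\}$.

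For the right-hand side, I would use a single integration by parts in $x$ to rewrite $\int \D_j h_i = \int v_{t_j}\,\var[1]{h_i}{v}$; since $h_i$ has no direct $v$-dependence, Lemma~\ref{lemma-var-xx} (with $I = \emptyset$) gives $\var[1]{h_i}{v} = -\D_x g_i$, so that on solutions $\int \D_j h_i = -\int g_j \D_x g_i = -\{\tint h_i, \tint h_j\}$, and by antisymmetry of the bracket $\int \D_i h_j = +\{\tint h_i, \tint h_j\}$. The closedness identity then becomes
\[
-\{\tint h_i, \tint h_j\} = -2\,\{\tint h_i, \tint h_j\},
\]
which forces $\{\tint h_i, \tint h_j\} = 0$.

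The main obstacle I anticipate is bookkeeping rather than anything conceptual: tracking signs through the two layers of integration by parts in $x$, and keeping the substitutions $v_{t_k} \mapsto g_k$ straight. A smaller subtlety is justifying the passage from an identity that holds only ``on solutions'' to one among formal integrals; this is benign, because after the PKdV substitutions every expression becomes a differential polynomial in $v, v_x, v_{xx}, \ldots$ alone, and a formal integral depends only on the equivalence class modulo $\D_x$-derivatives. The edge cases with $i = 1$ or $j = 1$ are trivial, since $h_1 \propto v_x^2$ generates pure $x$-translation.
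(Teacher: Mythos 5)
Your proposal is correct and follows essentially the same route as the paper: restrict $\d\cL=0$ to the $(x,t_i,t_j)$ triple, discard $\int \D_x L_{ij}$ as an exact $x$-derivative, expand the $\tfrac12 v_xv_{t_i}$ terms, and reduce $\int \D_j h_i$ to $-\int v_{t_j}\D_x g_i$ on solutions. The only (equivalent) difference is that you obtain the latter identity by integration by parts together with Lemma~\ref{lemma-var-xx}, where the paper invokes the prepackaged relation $\D_j h_i + \D_x(g_i)v_{t_j}=\D_x(a_{ij})$ from Lemma~\ref{lemma-ab}; your $-X=-2X$ bookkeeping is just a rearrangement of the paper's conclusion $\tint M_{1ij}=\{\tint h_i,\tint h_j\}$.
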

\begin{proof} Recall notation (\ref{eq dL}). We have
\begin{align*}
\tint M_{1jk}  &= \int \big( \D_x L_{jk} - \D_j L_{1k} + \D_k L_{1j} \big)  \\
&= \int \big( - \D_j L_{1k} + \D_k L_{1j} \big)  \\
&= \int \left( -\frac{1}{2} v_{x t_j} v_{t_k} - \frac{1}{2} v_x v_{t_k t_j} + \D_j h_k + \frac{1}{2} v_{x t_k} v_{t_j} + \frac{1}{2} v_x v_{t_j t_k} - \D_k L_{1j} \right)  \\
&= \int \left( \frac{1}{2} (v_{x t_k} v_{t_j} - v_{x t_j} v_{t_k}) - \D_k L_{1j} + \D_j h_k \right) 
\end{align*}
Using Equation (\ref{eq def a}) (which, as opposed to Equation (\ref{eq def b}), is independent of the form of $h_i$ and $g_i$), the evolution equations $v_{t_j} = g_j$, and integration by parts, we find that
\begin{align*}
\tint M_{1jk} &= \int \left( \frac{1}{2} (v_{x t_k} v_{t_j} - v_{x t_j} v_{t_k}) - \D_x a_{jk} + v_{t_k} \D_x g_j + \D_x a_{kj} - v_{t_j} D_x g_k  \right)  \\
&= \int \left( -\frac{1}{2} (g_j \D_x g_k  - g_k \D_x g_j) - \D_x a_{jk} + \D_x a_{kj} \right)  \\
&= \int g_k \D_x g_j  \\
&= - \int \var[1]{h_j}{v} \var[1]{h_k}{v_x}  \\
&= \left\{ \tint h_j, \tint h_k \right\} .
\end{align*}
Hence if $\d \cL = 0$ on solutions of the evolution equations $v_{t_j} = g_j$, then the Hamilton functionals are in involution.
\end{proof}

\section{Conclusion}

We have formulated the pluri-Lagrangian theory of integrable hierarchies, and propose it as a definition of integrability. The motivation for this definition comes from the discrete case \cite{BPS,LN1,S} and the fact that we have established a relation with the Hamiltonian side of the theory. For the Hamiltonians to be in involution, we need the additional fact that the Lagrangian two-form is closed. However, we believe that the essential part of the theory is inherently contained in the pluri-Lagrangian formalism.

Since the KdV hierarchy is one of the most important examples of an integrable hierarchy, our construction of a pluri-Lagrangian structure for the PKdV hierarchy is an additional indication that the existence of a pluri-Lagrangian structure is a reasonable definition of integrability.

It is remarkable that multi-time Euler-Lagrange equations are capable of producing evolutionary equations. This is a striking difference from the discrete case, where the evolution equations (\emph{quad equations}) imply the multi-time Euler-Lagrange equations (\emph{corner equations}), but are themselves not variational \cite{BPS}.

\medskip

This research is supported by the Berlin Mathematical School and the DFG Collaborative Research Center TRR 109 ``Discretization in Geometry and Dynamics''.

\appendix

\section{A very short introduction to the variational bicomplex}
\label{appendix-bicomp}

Here we introduce the variational bicomplex and derive the basic results that we use in the text. We follow Dickey, who provides a more complete discussion in \cite[Chapter 19]{Dickey}. Another good source on a (subtly different) variational bicomplex is Anderson's unfinished manuscript \cite{Anderson}. For ease of notation we restrict to real fields $u: \mathbb{R}^N \rightarrow \mathbb{R}$, rather than vector-valued fields.

The space of $(p,q)$-forms $\cA^{(p,q)}$ consists of all formal sums
\[ \omega^{p,q} = \sum f \,\delta u_{I_1} \wedge \ldots \wedge \delta u_{I_p} \wedge \d t_{j_1} \wedge \ldots \wedge \d t_{j_q}, \]
where $f$ is a polynomial in $u$ and partial derivatives of $u$ of arbitrary order with respect to any coordinates. The vertical one-forms $\delta u_I$ are dual to the vector fields $\der{}{u_I}$. The action of the derivative $\D_i$ on $\omega^{p,q}$ is
\begin{align*}
\D_i \omega^{p,q}
&= \sum (\D_i f) \,\delta u_{I_1} \wedge \ldots \wedge \delta u_{I_p} \wedge \d t_{j_1} \wedge \ldots \wedge \d t_{j_q} \\
&\hspace{12mm} + f \,\delta u_{I_1i} \wedge \ldots \wedge \delta u_{I_p} \wedge \d t_{j_1} \wedge \ldots \wedge \d t_{j_q} \\
&\hspace{12mm} + \ldots + f \,\delta u_{I_1} \wedge \ldots \wedge \delta u_{I_pi} \wedge \d t_{j_1} \wedge \ldots \wedge \d t_{j_q}.
\end{align*}
The integral of $\omega^{p,q}$ over an $q$-dimensional manifold is the $(p,0)$-form defined by
\[ \int \omega^{p,q} = \sum \left(\int f \, \d t_{j_1} \wedge \ldots \wedge \d t_{j_q} \right) \delta u_{I_1} \wedge \ldots \wedge \delta u_{I_p}. \]
We call $(0,q)$-forms horizontal and $(p,0)$-forms vertical. The \emph{horizontal exterior derivative} $\d: \cA^{(p,q)} \rightarrow \cA^{(p,q+1)}$ and the \emph{vertical exterior derivative} $\delta: \cA^{(p,q)} \rightarrow \cA^{(p+1,q)}$ are defined by the anti-derivation property

\begin{enumerate}
\item[\rm a)]   $\displaystyle \d \left( \omega_1^{p_1,q_1} \wedge \omega_2^{p_2,q_2} \right) = \d \omega_1^{p_1,q_1} \wedge \omega_2^{p_2,q_2} + (-1)^{p_1+q_1} \, \omega_1^{p_1,q_1} \wedge \d \omega_2^{p_2,q_2}$,

\medskip
$\displaystyle \delta \left( \omega_1^{p_1,q_1} \wedge \omega_2^{p_2,q_2} \right) = \delta \omega_1^{p_1,q_1} \wedge \omega_2^{p_2,q_2} + (-1)^{p_1+q_1} \, \omega_1^{p_1,q_1} \wedge \delta \omega_2^{p_2,q_2}$,
\end{enumerate}
and by the way they act on $(0,0)$-, $(1,0)$-, and $(0,1)$-forms:
\begin{enumerate}
\item[\rm b)]   $\displaystyle \d f = \sum_j \D_j f \,\d t_j = \sum_j \left(\der{f}{t_j} + \sum_I \der{f}{u_I}u_{Ij} \right) \d t_j$, \quad
$\displaystyle \delta f = \sum_I \der{f}{u_I} \delta u_{I}$,

\item[\rm c)]   $\displaystyle \d (\delta u_I )= - \sum_j \delta u_{Ij} \wedge \d t_j,$ \quad 
$\delta(\delta u_I) = 0$,

\item[\rm d)]   $\d(\d x_j) = 0$, \quad 
$\delta (\d x_j) = 0$, \quad 
$\displaystyle \delta (\d u_I )= - \d (\delta u_I )= \sum_j \delta u_{Ij} \wedge \d t_j$.
\end{enumerate}
Properties a)--d) determine the action of $\d$ and $\delta$ on any form. The corresponding mapping diagram is known as the \emph{variational bicomplex}.
\begin{align*}
\begin{matrix}
\vdots			& 					& \vdots			& 					& 			& 					& \vdots 			& 					& \vdots 			\\
\uparrow \delta	& 				 	& \uparrow \delta	& 				 	& 			& 					& \uparrow \delta 	& 					& \uparrow \delta 	\\
\cA^{(1,0)}		& \xrightarrow{\d} 	& \cA^{(1,1)}		& \xrightarrow{\d} 	& \ldots	& \xrightarrow{\d}	& \cA^{(1,n-1)} 	& \xrightarrow{\d}	& \cA^{(1,n)} 		\\
\uparrow \delta	& 				 	& \uparrow \delta	& 				 	& 			& 					& \uparrow \delta 	& 					& \uparrow \delta 	\\
\cA^{(0,0)}		& \xrightarrow{\d} 	& \cA^{(0,1)}		& \xrightarrow{\d} 	& \ldots	& \xrightarrow{\d}	& \cA^{(0,n-1)}		& \xrightarrow{\d}	& \cA^{(0,n)}		\\
\end{matrix}
\end{align*}

The following claims follow immediately from the definitions.

\begin{prop}\label{prop-delta-d}
We have $\d^2 = \delta^2 = 0$ and $\d \delta + \delta \d = 0$.
\end{prop}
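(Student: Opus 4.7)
The plan is to reduce everything to a few mechanical checks on the generators of the bigraded algebra $\bigoplus_{p,q}\cA^{(p,q)}$. The key observation is that because $\d$ and $\delta$ are anti-derivations of odd total degree (property a), each of the operators $\d^2$, $\delta^2$, and $\d\delta+\delta\d$ is an \emph{even derivation} on the algebra. Indeed, a direct application of the anti-derivation rule (a) gives, for any homogeneous $\omega_1,\omega_2$,
\[
\d^2(\omega_1\wedge \omega_2) = \d^2\omega_1\wedge \omega_2 + \omega_1\wedge \d^2\omega_2,
\]
the would-be cross term cancelling because two successive odd sign flips yield $+1$; analogous Leibniz rules hold for $\delta^2$ and for $\d\delta+\delta\d$.

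Since a derivation on a graded algebra that vanishes on a generating set vanishes identically, it suffices to verify the three identities on the generators furnished by (b)--(d), namely on the $(0,0)$-forms $f$, on the $\delta u_I$, and on the $\d t_j$. For $\delta^2$: on $f$ one obtains $\sum_{I,J}(\partial^2 f/\partial u_I\partial u_J)\,\delta u_J\wedge \delta u_I$, which vanishes because mixed partials are symmetric while $\delta u_J\wedge \delta u_I$ is antisymmetric; on $\delta u_I$ and on $\d t_j$ vanishing is immediate from (c) and (d). For $\d^2$: on $f$ the result is $\sum_{j,k}\D_k\D_j f\,\d t_k\wedge \d t_j=0$ by commutativity of total derivatives against antisymmetry of $\wedge$; on $\d t_j$ vanishing is part of (d); on $\delta u_I$, applying (c) twice gives $\sum_{j,k}\delta u_{Ijk}\wedge \d t_k\wedge \d t_j$, which vanishes by the same symmetry-versus-antisymmetry argument. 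For $\d\delta+\delta\d$: on $\d t_j$ and on $\delta u_I$ the identity is immediate using (c) and (d), and on $f$ it reduces to checking that two explicit sums cancel after using the identity $\D_j(\partial f/\partial u_I)-\partial(\D_j f)/\partial u_I = -\partial f/\partial u_{I\setminus j}$ coming from $\D_j=\partial/\partial t_j+\sum_K u_{Kj}\,\partial/\partial u_K$ (the subtracted term being present only when $j\in I$).

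The only step that is not completely routine is the last generator check, where one must carefully account for the non-commutativity of $\D_j$ with $\partial/\partial u_I$ and reindex $I\mapsto Ij$ in the second sum to see that the two contributions cancel. The rest is pure sign-tracking of the bidegree; once the ``even derivation'' reduction is in hand, verifying the identities on the three types of generators is elementary.
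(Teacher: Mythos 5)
Your proof is correct, and since the paper offers no proof at all (it merely asserts that the claims ``follow immediately from the definitions''), your argument is precisely the routine verification that assertion presupposes: reduce to generators via the even-derivation property and check $(0,0)$-forms, $\delta u_I$, and $\d t_j$ one by one. The only genuinely non-trivial step, the commutator identity $\partial(\D_j f)/\partial u_I - \D_j\bigl(\partial f/\partial u_I\bigr) = \partial f/\partial u_{I\setminus j}$ for $j \in I$ (and $0$ otherwise), which makes the two sums in $(\d\delta+\delta\d)f$ cancel after reindexing $I \mapsto Ij$, is handled correctly.
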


\begin{remark}
This implies that $\d + \delta: \cA^k \rightarrow \cA^{k+1}$, where $\cA^k := \bigcup_{i=0}^k \cA^{(i,k-i)}$, is an exterior derivative as well.
\end{remark}

\begin{prop}\label{prop-delta-D}
We have $\D_i \delta = \delta \D_i$.
\end{prop}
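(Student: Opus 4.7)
}
The plan is to show that the operator $[\D_i, \delta] := \D_i \delta - \delta \D_i$ vanishes on a generating set of the bicomplex, and then invoke graded-derivation properties to extend the result to all forms. The natural generators are the $(0,0)$-forms (differential polynomials in $u$ and its derivatives, together with the coordinate functions $t_j$), the vertical generators $\delta u_I$, and the horizontal generators $\d t_j$.

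First I would verify the identity on a $(0,0)$-form $f$. Using the explicit formulas
\[
\D_i f = \der{f}{t_i} + \sum_I u_{Ii} \der{f}{u_I}, \qquad \delta f = \sum_I \der{f}{u_I} \delta u_I,
\]
a direct computation gives
\[
\delta \D_i f = \sum_J \der{}{u_J}\!\left(\der{f}{t_i}\right)\delta u_J + \sum_I \der{f}{u_I} \delta u_{Ii} + \sum_{I,J} u_{Ii}\, \der{}{u_J}\der{f}{u_I} \delta u_J,
\]
\[
\D_i \delta f = \sum_I \der{}{t_i}\der{f}{u_I}\delta u_I + \sum_{I,J} u_{Ji}\, \der{}{u_J}\der{f}{u_I} \delta u_I + \sum_I \der{f}{u_I} \delta u_{Ii}.
\]
After renaming the dummy multi-indices and using the commutativity of ordinary partial derivatives, the two expressions agree. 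For the vertical generators, $\delta(\delta u_I) = 0$ and $\delta u_{Ii}$ is again of the form $\delta u_J$, so $\delta \D_i (\delta u_I) = \delta(\delta u_{Ii}) = 0 = \D_i(0) = \D_i \delta(\delta u_I)$. For the horizontal generators, property d) gives $\delta(\d t_j) = 0$, and by the definition in the appendix $\D_i$ does not act on $\d t_j$'s, so both sides vanish.

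To extend the identity from generators to arbitrary forms, I would observe that $\D_i$ is a (degree $0$) derivation with respect to the wedge product — this is explicit in the appendix formula for $\D_i\omega^{p,q}$ — while $\delta$ is a degree $1$ graded derivation by property a). Hence the graded commutator $[\D_i,\delta]$ is itself a graded derivation of degree $1$, and a graded derivation that vanishes on a generating set vanishes identically. This yields $\D_i \delta = \delta \D_i$ on all of $\cA^{(p,q)}$.

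The only mildly delicate step is the index bookkeeping in the computation on $(0,0)$-forms; everything else is a formal consequence of the derivation properties already recorded in the appendix, so I do not anticipate any real obstacle.
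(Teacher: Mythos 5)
Your proof is correct, and it is essentially the paper's: the paper offers no written argument for this proposition, stating only that it ``follows immediately from the definitions,'' and your verification on the generators $f$, $\delta u_I$, $\d t_j$ followed by the graded-derivation extension is precisely the routine check that remark alludes to. The index bookkeeping on $(0,0)$-forms and the observation that $\D_i$ acts trivially on the $\d t_j$'s are both consistent with the appendix's definition of $\D_i$ on $(p,q)$-forms.
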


\begin{prop}\label{prop-delta-i}
For a differential polynomial $h$, define the corresponding vertical generalized vector field by $\partial_h := \sum_I h_I \der{}{u_I}$. We have $\d \i_{\partial_h} + \i_{\partial_h} \d = 0$.
\end{prop}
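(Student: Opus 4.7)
The plan is to exploit the derivation property of both $\d$ and $\i_{\partial_h}$ and thereby reduce the identity to checking it on generators of the bigraded algebra. Both operators are graded derivations of odd total degree: $\d$ by property (a), and $\i_{\partial_h}$ by the standard contraction rule
\[
\i_{\partial_h}\!\left(\omega_1^{p_1,q_1}\wedge\omega_2^{p_2,q_2}\right)
= (\i_{\partial_h}\omega_1^{p_1,q_1})\wedge\omega_2^{p_2,q_2}
+ (-1)^{p_1+q_1}\,\omega_1^{p_1,q_1}\wedge(\i_{\partial_h}\omega_2^{p_2,q_2}),
\]
together with the fact that $\i_{\partial_h}\d t_j=0$ (since $\partial_h$ is vertical) and $\i_{\partial_h}\delta u_I = h_I$. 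The anticommutator of two odd graded derivations is an even derivation, so the operator $K:=\d\,\i_{\partial_h}+\i_{\partial_h}\,\d$ is a derivation of $\bigoplus_{p,q}\cA^{(p,q)}$. Hence it suffices to verify $K=0$ on each of the generator types $f\in\cA^{(0,0)}$, $\delta u_I\in\cA^{(1,0)}$, and $\d t_j\in\cA^{(0,1)}$.

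On functions $f$, both summands vanish: $\i_{\partial_h}f=0$ because $f$ has form-degree zero, while $\d f=\sum_j (\D_j f)\,\d t_j$ is horizontal, so $\i_{\partial_h}\d f=0$. On $\d t_j$, property (d) gives $\d(\d t_j)=0$, and $\i_{\partial_h}(\d t_j)=0$ by verticality of $\partial_h$. The only nontrivial case is $\delta u_I$: by property (c),
\[
\i_{\partial_h}\,\d(\delta u_I) \;=\; \i_{\partial_h}\!\left(-\sum_j \delta u_{Ij}\wedge \d t_j\right) \;=\; -\sum_j h_{Ij}\,\d t_j,
\]
while
\[
\d\,\i_{\partial_h}(\delta u_I) \;=\; \d(h_I) \;=\; \sum_j (\D_j h_I)\,\d t_j.
\]
Adding these,
\[
K(\delta u_I) \;=\; \sum_j \bigl(\D_j h_I - h_{Ij}\bigr)\,\d t_j.
\]

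The final step — and the only genuine content of the proposition — is to observe that $\partial_h=\sum_I h_I\,\partial/\partial u_I$ is the \emph{prolongation} of a vertical vector field, which by definition means the components satisfy $h_{Ij}=\D_j h_I$ for every multi-index $I$ and direction $j$. Substituting this identity makes the right-hand side vanish, so $K(\delta u_I)=0$. Combined with the vanishing on the other two generator types and the derivation property of $K$, this yields $K\equiv 0$ on all of $\cA^{(p,q)}$. The main thing to be careful about is keeping the signs in the graded Leibniz rule for $\i_{\partial_h}$ consistent with those in property (a); once that is settled, the prolongation identity $h_{Ij}=\D_j h_I$ is exactly what is needed.
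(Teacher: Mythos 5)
Your proof is correct and follows essentially the same route as the paper: reduce to the generators $f$, $\d t_j$, and $\delta u_I$ via the derivation property, note the first two cases are trivially zero, and settle the $\delta u_I$ case with the prolongation identity $h_{Ij}=\D_j h_I$ (the paper phrases this as $\sum_j h_{Ij}\,\d t_j = \d h_I$). Your remark that the anticommutator of two odd derivations is again a derivation just makes explicit the reduction step the paper leaves implicit.
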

\begin{proof}
It suffices to show this for (0,0)-forms (polynomials $f$ in $u$ and partial derivatives of $u$), for (0,1)-forms $\d t_j$, and for (1,0)-forms $\delta u_I$. For (0,0)-forms, both terms of the claimed identity are zero:
\[ \d ( \i_{\partial_h} f ) = 0, \quad
\i_{\partial_h} ( \d f )= \i_{\partial_h} \bigg( \sum_j \D_j f \,\d t_j\bigg) = 0.
\]
Likewise for (0,1)-forms:
\[
\d ( \i_{\partial_h} \d t_j ) =  0, \quad \i_{\partial_h} ( \d \d t_j ) = 0. \]
For (1,0)-forms we find:
\[ \i_{\partial_h} ( \d \delta u_I )
= \i_{\partial_h} \bigg( - \sum_j \delta u_{Ij} \wedge \d t_j \bigg)
= - \sum_j h_{Ij} \, \d t_j
= - \d h_{I}
= - \d ( \i_{\partial_h} \delta u_I ). \qedhere \]
\end{proof}

\section{Proof of Lemma \ref{lemma-conv}}
\label{Appendix proof of lemma}

Assume that the action is stationary on all $d$-dimensional stepped surfaces in $\mathbb{R}^N$. Let $S$ be a smooth $d$-dimensional surface in $\mathbb{R}^N$. Partition the space $\mathbb{R}^N$ into hypercubes $C_i$ of edge length $\varepsilon$. We can choose this partitioning in such a way that the surface $S$ does not contain the center of any of the hypercubes. Denote $S_i^N := S \cap C_i$.

We give each hypercube its own coordinate system $[-1,1]^N \rightarrow C_i$ and identify the hypercube with its coordinates. In each \emph{punctured} hypercube $[-1,1]^N \setminus \{0\}$ we define a family of \emph{balloon maps}
$$ \mathcal{B}_\alpha^{N}: [-1,1]^N \setminus \{0\} \rightarrow [-1,1]^N \setminus \{0\}:
x \mapsto 
\begin{cases}
\dfrac{\alpha x}{\| x \|_\mathrm{max}} & \text{if } \| x \|_\mathrm{max} < \alpha \\
x & \text{if } \| x \|_\mathrm{max} \geq \alpha
\end{cases}$$
for $\alpha \in [0,1]$. Here, $ \| x \|_\mathrm{max} := \max(|x_1|, \ldots |x_N|)$ denotes the maximum norm with respect to the local coordinates. The idea is that from the center of each hypercube, we inflate a square balloon which pushes the curve away from the center, until it lies on the boundary of the hypercube. 

\begin{figure}[h]
\centering
\includegraphics[width=\linewidth]{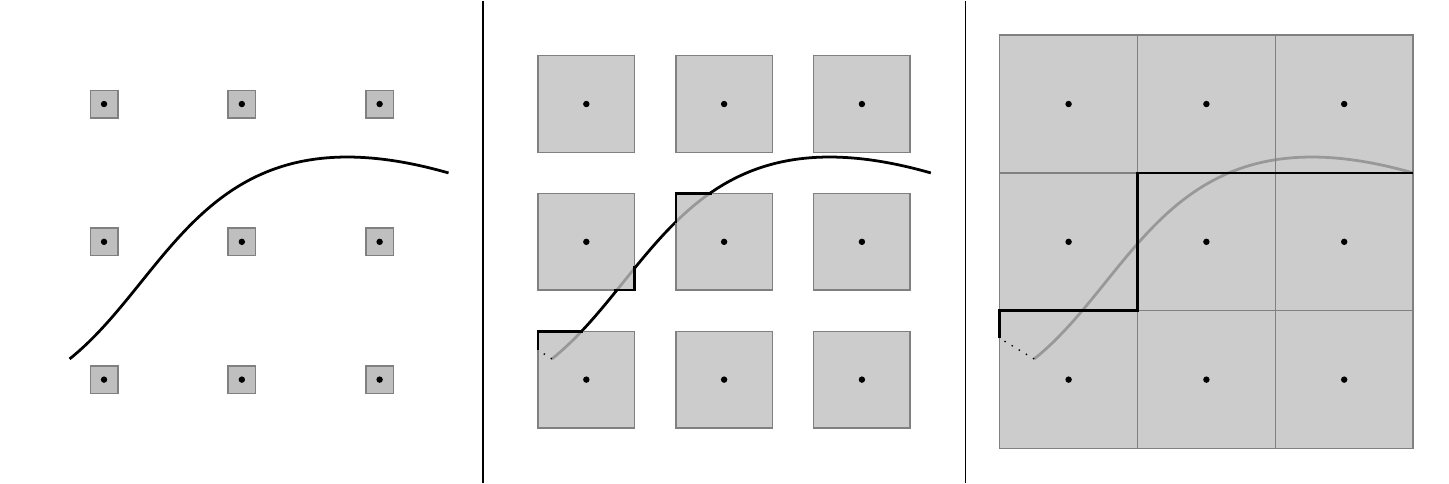}
\caption{Balloon maps in nine adjacent squares deforming a curve in $\mathbb{R}^2$. From left to right: $\alpha = 0.2$, $\alpha = 0.7$ and $\alpha = 1$.}
\label{Fig balloons}
\end{figure}

Indeed, the deformed curve $S_i^{N-1} := \mathcal{B}_1^{N} (S_i^N) = \mathcal{B}_1^{N} (S \cap C_i)$ lies on the boundary of the hypercube, i.e.\@ within the $(N-1)$-faces of the hypercube. We want it to lie within the $d$-faces of the hypercube, which would imply that it is a stepped surface. To achieve this, we introduce a balloon map 
$$ \mathcal{B}_\alpha^{N-1,j}: [-1,1]^{N-1} \setminus \{0\} \rightarrow [-1,1]^{N-1} \setminus \{0\}:
x \mapsto 
\begin{cases}
\dfrac{\alpha x}{\| x \|_\mathrm{max}} & \text{if } \| x \|_\mathrm{max} < \alpha \\
x & \text{if } \| x \|_\mathrm{max} \geq \alpha
\end{cases}$$
in each of the $(N-1)$-faces $C_i^j$ of the hypercube $C_i$, which pushes the surface into the $(N-2)$-faces. We denote the surface we obtain this way by $S_i^{N-2}$. If the surface happens to contain the center of a $(N-1)$-face, we can slightly perturb the surface without affecting the argument. By iterating this procedure, using balloon maps $\mathcal{B}_\alpha^{k,j}$ in each $k$-face $C_i^j$ ($N \geq k \geq d+1$), we obtain a surface $S_i^d$ that lies in the $d$-faces.

\begin{figure}[h]
\centering
\includegraphics[width=\linewidth]{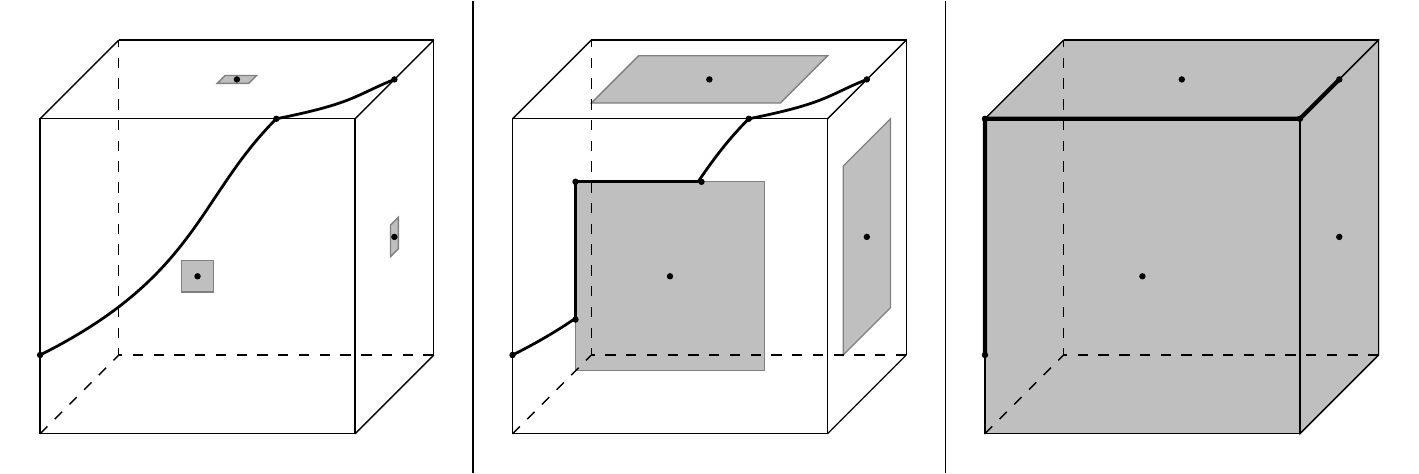}
\caption{The second and last iteration for a curve in $\mathbb{R}^3$. From left to right: $\alpha = 0.1$, $\alpha = 0.6$ and $\alpha = 1$.}
\end{figure}

Consider the $(d+1)$-dimensional surface 
$$ M_i := \bigcup_{k=d+1}^N \bigcup_{\substack{j:\  C_i^j \text{ is a} \\ \text{$k$-face of } C_i}} \bigcup_{\alpha \in [0,1]} \mathcal{B}_\alpha^{k,j} (S_i^k \cap C_i^j)$$
that is swept out by the consecutive application of the balloon maps to $S_i^N := S \cap C_i$. Assuming that $\varepsilon$ is small compared to the curvature of $S$, the $(d+1)$-dimensional volume of each of the $\bigcup_{\alpha \in [0,1]} \mathcal{B}_\alpha^{k,j} (S_i^k \cap C_i^j)$ is of the order $\varepsilon^{d+1}$. The number of such volumes making up $M_i$ only depends on the dimensions $N$ and $d$, not on $\varepsilon$, so the $(d+1)$-dimensional volume $|M_i|$ of $M_i$ is of the order $|M_i| = \mathcal{O}(\varepsilon^{d+1})$.

Now consider a variation $\cV$ with compact support and restrict the surface $S$ to this support. Denote by $\widehat{S} := \bigcup_i  S_i^d$ the stepped surface obtained from $S$ by repeated application of balloon maps in all the hypercubes, and by $M := \bigcup_i M_i$ the $(d+1)$-dimensional surface swept out by these balloon maps. The bounary of $M$ consists of $S$, $\widehat S$, and a small strip of area $\mathcal O(\varepsilon)$ connecting the boundaries of $S$ and $\widehat S$ (the dotted line in Figure \ref{Fig balloons}). The number of hypercubes intersecting $S$ is of order $\varepsilon^{-d}$, so $| M | = \mathcal{O}(\varepsilon^{-d}) {O}(\varepsilon^{d+1}) = \mathcal{O}(\varepsilon)$. It follows that
$$\left| \int_{\widehat{S}} \i_{\pr \cV} \delta \cL- \int_{S} \i_{\pr \cV} \delta \cL \right| 
= \left|\int_{\partial M} \i_{\pr \cV} \delta \cL \right| +\mathcal{O}(\varepsilon)
= \left|\int_{M} \d(\i_{\pr \cV} \delta \cL) \right| +\mathcal{O}(\varepsilon)
\rightarrow 0$$
as $\varepsilon \rightarrow 0$. By assumption, $\int_{\widehat{S}} \i_{\pr \cV} \delta \cL = 0$ for all $\varepsilon$, so the action on $S$ will be stationary as well. \qed

\bibliographystyle{amsalpha}

\end{document}